\documentclass[12pt]{article}
\usepackage[breaklinks=true]{hyperref}
\usepackage{bm}
\usepackage[utf8]{inputenc}
\usepackage{hyperref} 
\hypersetup{colorlinks=true, linkcolor=black, citecolor=black, urlcolor=blue!80!black} 
\usepackage{amsfonts,amsmath,amsxtra,amsthm}
\usepackage{amssymb}
\usepackage{lscape}
\usepackage[normalem]{ulem}
\usepackage{breqn}
\usepackage[russian,english]{babel}
\usepackage{xcolor}
\usepackage{tikz, pgfplots}
\usepackage{tcolorbox}
\usepackage{mathrsfs}
\usepackage{cancel}

\numberwithin{equation}{section}

\renewcommand{\epsilon}{\varepsilon}

\def\a{\alpha}

\def\b{\beta}

\def\beq{\begin{equation}}
	\def\eeq{\end{equation}}
\def\beqq{\begin{equation*}}
	\def\eeqq{\end{equation*}}

\def\bs{\begin{split}}
	\def\es{\end{split}}

\def\bl{{\boldsymbol{\lambda}}}

\def\bmm{{\boldsymbol{\mu}}}

\def\bt{\boldsymbol{t}}

\def\bx{\boldsymbol{x}}
\def\by{\boldsymbol{y}}
\def\bz{\boldsymbol{z}}

\def\bbx{\underline{\boldsymbol{x}}}
\def\bby{\underline{\boldsymbol{y}}}

\def\bbz{\underline{\boldsymbol{z}}}

\def\const{{2\pi\imath}}

\def\Im{\operatorname{Im}}
\def\cg{\mathit{g}}
\def\g{\gamma}

\def\gl{\mathfrak{gl}}

\def\i{\imath}

\def\K{{K}}

\def\l{\lambda}

\def\la{\lambda}

\def\R{\mathbb{R}}
\def\Re{\mathrm{Re}\,}

\def\ve{\varepsilon}

	\newtheorem{lemma}{Lemma}[section]
	\newtheorem{proposition}{Proposition}[section]
	\newtheorem*{proposition*}{Proposition}
	\newtheorem{corollary}{Corollary}[section]
	\newtheorem*{theorem*}{Theorem}
	\newtheorem{theorem}{Theorem}[section]

	\newcommand{\rf}[1]{(\ref{#1})}

	\def\rPsi{\Psi^{}}   
	\def\rQ{Q^{}} 
	 
	\def\rL{\Lambda^{}}
	
	\def\rK{K^{}} \def\ccK{\hat{K}}
	\def\rmu{\mu^{}}

	\renewcommand{\iota}{\imath}
	
\begin{document}

			\begin{center}
				{\bf \large On Borodin - Gorin presentation of Heckman--Opdam $\gl_n$ hypergeometric function \\[4pt] }
				
				\vspace{0.4cm}
				
				{ L. Cherepanov$^{\circ}$,  S. Khoroshkin$^{\circ}$}
				
				\vspace{0.4cm}
				
				{\small \it					
			$^\circ$National Research University Higher School of Economics, \\
					Myasnitskaya 20, Moscow, 101000, Russia \vspace{0.2cm}	}
				
			\end{center}
			
			\begin{abstract} 
				We prove that Borodin - Gorin presentation for Heckman--Opdam $\gl_n$ hypergeometric function by  interlacing integrals is equivalent to two other integral presentations studied in  https://arxiv.org/abs/2508.18864. We also consider related Baxter operators and the corresponding limiting procedure from the Ruijsenaars hyperbolic model

			\end{abstract}
		\section{Introduction} 
		In \cite{BCDK} we proved the coincidence of two dual integral presentation of Heckman - Opdam hypergeometric function, obtained earlier in \cite{HR0} and  \cite{KK1,KK}. They can be described as follows. Set \cite{HR0}
		\begin{multline} \label{Psi-E}
			\Psi_{\bl_n}(\bx_n) = \prod_{k = 1}^{n - 1} \frac{(2\pi \Gamma(g))^k}{k!} \, \int_{\mathbb{R}^{n(n - 1)/2}} \exp \biggl( 2\pi \imath \sum_{k = 1}^n \lambda_k \biggl( \sum_{i = 1}^{k} y_{ki} - \sum_{i = 1}^{k - 1} y_{k - 1, i} \biggr) \biggr) \\[6pt]
			\times  \prod_{k = 1}^{n - 1} \frac{\prod_{1 \leq i < j \leq k} \, [4 \sh^2 \pi (y_{ki} - y_{kj})]^g}{ \prod_{i = 1}^{k + 1} \prod_{j = 1}^{k} \, [2\ch \pi (y_{k+ 1, i} - y_{kj})]^g }  \, \prod_{k = 1}^{n - 1} \prod_{i = 1}^k dy_{ki},
		\end{multline}
		where we denote $\bx_n = (x_1, \dots, x_n)$ and identify $y_{ki} \equiv y_{k,i}$, $y_{ni} \equiv x_i$. This expression is equal to \cite{KK1,KK}
		\begin{multline} \label{Psi-MB}
			{\Psi}_{\bl_n}(\bx_n) = \prod_{k = 1}^{n - 1} \frac{(2\pi \Gamma(g))^{-k}}{k!} \, \int_{\mathbb{R}^{n(n - 1)/2}} \exp \biggl( 2\pi \imath \sum_{k = 1}^n x_k \biggl( \sum_{i = 1}^{k} \gamma_{ki} - \sum_{i = 1}^{k - 1} \gamma_{k - 1, i} \biggr) \biggr) \\[6pt]
			\times  \prod_{k = 1}^{n - 1} \frac{ \prod_{i = 1}^{k + 1} \prod_{j = 1}^{k} \, \Gamma \bigl( \imath \gamma_{k + 1, i} - \imath  \gamma_{kj} + \frac{g}{2} \bigr) \, \Gamma \bigl( \imath \gamma_{k j} - \imath  \gamma_{k + 1, i} + \frac{g}{2} \bigr) }{\prod_{1 \leq i \not= j \leq k} \, \Gamma(\imath \gamma_{k i} - \imath \gamma_{k j}) \Gamma(\imath \gamma_{kj} - \imath \gamma_{ki} + g)} \, \prod_{k = 1}^{n - 1} \prod_{i = 1}^k d\gamma_{ki},
		\end{multline}
		where we identify $\gamma_{ni} \equiv \lambda_i$. Then, as was proved in \cite{BCDK} the function 
		\begin{align} \label{00} 
		\mathcal{F}_{\bl_n}(\bx_n)= { \prod_{k=1}^n \frac{\Gamma(kg)}{\Gamma(g)}}\, \prod_{1\leq i\not=j \leq n}\Gamma^{-1}(\imath(\l_i-\l_j)+g) \, {\Psi}_{\bl_n}(\bx_n).
		\end{align}
		is $\gl_n$ Heckman - Opdam hypergeometric function for all $g>0$. 
		
		A.Borodin and V.Gorin suggested  another expression for Heckman - Opdam hypergeometric function by means of  interlaced integrals. 
		It can be described as follows. Using vector notation
		\beqq \bx_{n}=(x_1,\ldots, x_n), \qquad \by_m=(y_{m,1},\ldots y_{m,m}) \eeqq
		with the partial order 
		\beqq \by_m\succ\by_{m-1}\eeqq deciphering as
		\beqq y_{m,1}<y_{m-1,1}<y_{m,2}\ldots<y_{m-1,m-1}<y_{m,m}, \eeqq
		set for $\bx_n=\{x_1<x_2<\ldots <x_n\}$ 
		\beq \label{01}\begin{split} \Psi^*_{\bl_n}(\bx_n)= \!\!
		\prod\limits_{1\leq a<b\leq n}\left(\frac{2\pi}{\Gamma(g)} |2\sh\pi(x_a-x_b)|^{1-2g}\right)\!\!\!\!\!\!\!\!
		\int\limits_{\bx_n=\by_n\succ\by_{n-1}\succ\ldots\succ\by_1}\!\!\!\!\!\!\!\!\!\!\! e^{2\pi i\sum\limits_{k=1}^n \l_k(\sum\limits_{i=1}^ky_{k,i}-\sum\limits_{j=1}^{k-1}y_{k-1,j})}                    \ \\ 
		\prod_{k=1}^{n-1}\left(\prod_{1\leq a<b\leq k}|2\sh\pi(y_{k,a}-y_{k,b})|^{2-2g}\prod_{i=1}^k\prod_{j=1}^{k+1}|2\sh\pi(y_{k,i}-y_{k+1,j})|^{g-1}\right)d\by_{k}.\end{split}\eeq 
		In this note, see Theorem \ref{theorem1}, we prove using rational degereration of E.Rains integral identities that
		 \beq \label{02}\Psi^*_{\bl_n}(\bx_n)=  \a_n^{-1} \mathcal{F}_{\bl_n}(\bx_n),\eeq
		 where
		 \beqq \label{Q31}\a_n=\prod_{k=1}^n\frac{\Gamma(kg)}{\Gamma(g)}.\eeqq
		For that we introduce corresponding raising operators, commutative family of new Baxter operators, study local relations between them   and between raising and Baxter  operators, related to the presentations \rf{Psi-E} and \rf{Psi-MB}, and the spectral properties of new Baxter operators, see Theorem \ref{theorem2}.
		
		Besides, we observe the limiting procedure from the hyperbolic system, which relates the results of  \cite{BDKK4} with the present paper.

		\setcounter{equation}{0}
		\section{Hallnas - Ruijsenaars presentation}
	 Both presentations \rf{Psi-E} and \rf{Psi-MB}, as well as their equality, can be obtained by the study of local raising and Baxter operators \cite{BCDK}. In the present paper we deal only with preentation \rf{Psi-E}
		  Set
		 \beq\label{s1}  K(x)=(2\ch \pi x)^{-\cg}, \qquad w(x)=|2\sh \pi x|^{\cg},\qquad \mu(x)=w^2(x)\eeq
		 and, correspondingly, for
		 \beqq \bx_{n}=(x_1,\ldots, x_n), \qquad \by_m=(y_{1},\ldots y_{m}), \eeqq
		 set 
		 \beq\label{s2} \begin{split} K(\bx_n,\by_m)=\prod_{i = 1}^n \prod_{j = 1}^m K(x_i-y_j), \qquad \mu(\bx_n)= \prod_{1\leq i<j \leq n} \mu(x_i-x_j).\end{split}\eeq 
	The raising operator $\rL_{n}(\l)$ is the linear  operator transforming the function of $n-1$ variable to the function of $n$ variables. It is the integral operator
	\beq\label{r14}\begin{split} \left(\rL_{n}(\l)f\right)(\bx_n)=d_{n - 1} \int_{\R^{n-1}}d\by_{n-1} \, \rL(\bx_n,\by_{n-1};\l) f(\by_{n-1})
	\end{split}\eeq
	with the kernel
	\beq\label{r15} \rL(\bx_n,\by_{n-1};\l)= e^{\const \l(\bbx_n-\bby_{n-1})}
	\rK(\bx_n,\by_{n-1})\rmu  (\by_{n-1}), \qquad x_j,y_j \in \R,
	\eeq
	{and the normalizing constant} 
	\begin{equation}\label{dconst}
		d_n =  \frac{(2\pi \Gamma(g))^n}{n!}.
	\end{equation}
	{We denote}
	\beqq \bbx_n=\sum_{k=1}^n x_k,\qquad \bby_{n-1}=\sum_{k=1}^{n-1}y_k .\eeqq 
	According to \rf{Psi-E}, the wave function $\rPsi_{\bl_n}(\bx_n)$ can be described by means of  successive application of raising operators starting from the plane wave
	\beq \label{r16}\Psi_{\bl_n}(\bx_n)=\rL_{n}(\l_n) \, \rL_{n-1}(\l_{n-1})\cdots \rL_{2}(\l_2) \, e^{\const \l_1x_1}.
	\eeq
	The Baxter operators $Q_n(\l)$ transform functions of $n$ variables to the functions of $n$ variables. They are given by similar expressions
	\begin{equation}\label{r9}
		\left( \rQ_{n}(\lambda) f\right) (\bm{x}_n) =  d_n \, \int_{\mathbb{R}^n} d\bm{y}_n \, \rQ(\bm{x}_n, \bm{y}_n; \lambda) f(\bm{y}_n)
	\end{equation}
	with the kernel
	\beq\label{r10} \rQ(\bx_n,\by_{n};\l)= e^{\const \l(\bbx_n-\bby_n)}
	\rK(\bx_n,\by_{n})\rmu  (\by_{n}), \qquad x_j,y_j \in \R.
	\eeq	
	According to \cite{BCDK}, raising operators satisfy  exchange relations
	\begin{align}\label{s19} 
		\Lambda_n(\l)\Lambda_{n-1}(\rho)=\Lambda_n(\rho)\Lambda_{n-1}(\l)
	\end{align}
	under assumption $| \Im (\l - \rho) | < g$.
	 Under the same assumption the Baxter operators commute (see also \cite{H})
	 \beq\label{s18} Q_n(\l)Q_n(\rho)=Q_n(\rho)Q_n(\l).\eeq
	 	Denote by
	 	\begin{equation}\label{s12}
	 		\ccK(\l) = \Gamma \Bigl(\imath\l + \frac{g}{2} \Bigr) \Gamma\Bigl(-\imath\l+\frac{g}{2}\Bigr)
	 	\end{equation}
	the dual kernel function for the Sutherland model. The following local relation was also proved in \cite{BCDK}	
	\beq\label{s20} Q_n(\l)\Lambda_n(\rho)=\ccK(\l-\rho) \, \Lambda_n(\rho)Q_{n-1}(\l)\eeq	
	under assumption 
		 $|\Im(\lambda-\rho)| < \frac{g}{2}$.	
	As a corollary, we see  that the wave function $\rPsi_{\bl_n}(\bx_n)$ diagonalize not only Calogero - Sutherland Hamiltonians ,
	 but also Baxter operators (that was indepedently shown by M.Hallnas \cite{H})
	\beq\label {s11}
	Q_n(\lambda) \, \Psi_{\bm{\lambda}_n}(\bx_n) = \prod^n_{i=1}\ccK(\lambda-\lambda_i) \, \Psi_{\bm{\lambda}_n}(\bx_n).
	\eeq
	
		\setcounter{equation}{0}
	\section{Interlacing  Baxter and raising operators.}

 {\bf 1.} Define operators $Q^*_{n,\pm}(\mu)$ which transform  symmetric functions of $n$ real variables to  symmetric functions of $n$ variables so that for $x_1<x_2\ldots<x_n$
		\beq\label{I0}\begin{split} Q^*_{n,+}(\mu)f(\bx_n)=d_n^*\prod_{i<j}|2\sh\pi(x_i-x_j)|^{-2g+1}\hspace{-1.4cm}\int\limits_ {x_1<y_1<x_2<\ldots<y_n<\infty}\hspace{-1.2cm} d\by_n e^{2\pi i\mu(\bbx_n-\bby_n)}\times\\ \prod_{i<j}|2\sh\pi(y_i-y_j)| \prod_{i,k=1}^n |2\sh\pi(x_i-y_k)|^{g-1} f(\by_n) \end{split}\eeq
		and 
		\beq\label{I1}\begin{split} Q^*_{n,-}(\mu)f(\bx_n)=d_n^*\prod_{i<j}|2\sh\pi(x_i-x_j)|^{-2g+1}\hspace{-1.4cm}\int\limits_ {-\infty<y_1<x_1<\ldots<y_n<x_n}\hspace{-1.2cm} d\by_n e^{2\pi i\mu(\bbx_n-\bby_n)}\times\\ \prod_{i<j}|2\sh\pi(y_i-y_j)| \prod_{i,k=1}^n |2\sh\pi(x_i-y_k)|^{g-1} f(\by_n), \end{split}\eeq
{where}
	 \beq\label{I3} d_n^*=
\left(\frac{2\pi}	{\Gamma(g)}\right)^n.\eeq

{ Note that for real $\mu$ operators $Q^*_{n,\pm}(\mu)$ form a symmetric pair 
 with respect either scalar product
 \beq\label{pair1} \langle f,g\rangle=\int_{\R^n}\bar{f}(\bx_n)g(\bx_n)\tilde{\Delta}_n(\bx_n)^{2g}d\bx_n\eeq
 or bilinear pairing  of symmetric functions
 \beq\label{pair2}  ( f,g)=\int_{\R^n}\bar{f}(-\bx_n)g(\bx_n)\tilde{\Delta}_n(\bx_n)^{2g}d\bx_n,\eeq
 where
 \beq\label{Delta} \tilde{\Delta}_n(\bx_n) = \prod_{1\leq i<j\leq n}|2\sh\pi(x_i-x_j)|,
 	\eeq
 once the corresponding integrals  converge
\beq\label{pair3} \begin{split} \langle Q^*_{n,\pm}(\mu)f,g\rangle= \langle f, Q^*_{n,\mp}(\mu) g\rangle,\qquad ( Q^*_{n,\pm}(\mu)f,g)= ( f, Q^*_{n,\mp}(\mu) g).
\end{split}\eeq

		\begin{proposition}\label{propositionQ} For $\Re g\in (0,1)$ and real $\bl$ and $\bmm$  the wave functions $\Psi_{\bl_n}(\bx_n)$ are in the domain of definition of the operators $Q^*_{n,\pm}(\mu)$ \end{proposition}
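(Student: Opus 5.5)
The plan is to show that the integrals defining $Q^*_{n,\pm}(\mu)\Psi_{\bl_n}$ converge absolutely for every ordered $\bx_n$ with distinct coordinates. First we need a uniform bound on $\Psi_{\bl_n}$. Since $\bl$ is real, the exponential factor in \rf{Psi-E} has modulus one. Hence
\beqq |\Psi_{\bl_n}(\by_n)|\leq \Psi_{0}^{(\Re g)}(\by_n),\eeqq
where the right-hand side is the same integral with $\lambda=0$, $g$ replaced by $\Re g$, and all factors taken in absolute value. This majorant is real-analytic and symmetric in $\by_n$.

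The next step is to estimate its growth. We do this by induction through the raising operators \rf{r16}. We claim a bound of the form
\beqq |\Psi_{\bl_n}(\by_n)|\leq C\, e^{-\pi \Re g\,\sum_{i<j}|y_i-y_j|}\bigl(1+\textstyle\sum_{i<j}|y_i-y_j|\bigr)^{N}.\eeqq
This is the known asymptotic behaviour of the Heckman--Opdam function at real spectral parameter: it equals $\mathcal F$ times constants, decays like $e^{-\rho\cdot}$ in the Weyl chamber with $\rho\sim g/2\cdot$(positive roots) in the additive normalisation, and has at most polynomial corrections. To justify this inside \rf{Psi-E}, we would check at each level $k$ that the $\sh$-numerators grow like $e^{\pi g\sum_{i<j}|y_{ki}-y_{kj}|}$, while the $\ch$-denominators decay like $e^{-\pi g\sum|y_{k+1,i}-y_{kj}|}$. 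The $y_{k}$ integration then localizes to the interlacing region, up to exponentially small tails, and the resulting exponent is exactly $-\pi g\sum_{i<j}|x_i-x_j|$. Translation invariance along the diagonal, $\Psi(\bx+c)=e^{2\pi\imath c\bbx\cdot}\Psi$ by the total-momentum phase, shows that no decay is needed in the centre-of-mass direction.

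With this bound, we insert it into \rf{I0}. Along the interlacing domain, $\prod_{i<j}|2\sh\pi(y_i-y_j)|\cdot|\Psi(\by_n)|$ grows at most like $e^{\pi(1-\Re g)\sum_{i<j}(y_j-y_i)}$ times a polynomial. Meanwhile $\prod_{i,k}|2\sh\pi(x_i-y_k)|^{g-1}$, since $\Re g<1$, decays like $e^{-\pi(1-\Re g)\sum_{i,k}|x_i-y_k|}$ and has only integrable singularities $|x_i-y_k|^{\Re g-1}$ at coincidences. For $Q^*_{n,+}$ only $y_n\to+\infty$ is unbounded. In that direction the $\sh$-factors involving $y_n$ give growth $e^{\pi(1-\Re g)(n-1)y_n}$ from the Vandermonde-type product, and decay $e^{-\pi(1-\Re g)n y_n}$ from the $x$--$y$ product. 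The net factor is $e^{-\pi(1-\Re g)y_n}$, which is integrable. The case of $Q^*_{n,-}$ with $y_1\to-\infty$ is symmetric. The phase $e^{2\pi\imath\mu(\ldots)}$ has modulus one for real $\mu$.

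The main obstacle is the growth estimate for $\Psi_{\bl_n}$. Here the polynomial factors must not spoil the margin, but since the margin is exponential, polynomial losses are harmless. It also suffices to use the cruder bound $|\Psi|\leq Ce^{-\pi\Re g\,(y_{\max}-y_{\min})}\cdot$poly, which only requires controlling the extreme coordinates. I would prove that cruder bound by a direct induction on $n$ via \rf{r14}, estimating $\int K(\bx_n,\by_{n-1})\mu(\by_{n-1})(\text{bound for }\by_{n-1})\,d\by_{n-1}$ by splitting the integration region according to the ordering of the $y$'s relative to the $x$'s.
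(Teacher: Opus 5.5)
Your main argument is the paper's own. The singularities $|x_i-y_k|^{\Re g-1}$ are integrable. In the domain of $Q^*_{n,+}(\mu)$ only $y_n$ is unbounded, since $y_1,\dots,y_{n-1}$ are trapped in $[x_1,x_n]$. The kernel grows like $e^{\pi(n\Re g-1)y_n}$, and the pairwise decay $|\Psi_{\bl_n}(\by_n)|\lesssim e^{-\pi\Re g\sum_{i<j}|y_i-y_j|}$ supplies $e^{-\pi\Re g\,(n-1)y_n}$, leaving the integrable factor $e^{\pi(\Re g-1)y_n}$. The paper does not reprove the decay estimate; it simply quotes it as \rf{I1a} (Theorem 2.2 of BCDK). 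Your majorant $\Psi_0^{(\Re g)}$ for complex $g$ and the extra polynomial factor are harmless refinements of that step.

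The fallback in your last paragraph, however, fails for $n\ge 3$. A bound $|\Psi_{\bl_n}|\le C e^{-\pi\Re g\,(y_{\max}-y_{\min})}\cdot\mathrm{poly}$ gives only $e^{-\pi\Re g\,y_n}$ as $y_n\to+\infty$. The net exponent then becomes $\pi\bigl((n-1)\Re g-1\bigr)y_n$, so you get convergence only for $\Re g<1/(n-1)$. Convergence needs decay in $y_n$ at a rate $c>n\Re g-1$. For $\Re g$ near $1$ this is essentially $(n-1)\Re g$: one factor $e^{-\pi\Re g|y_i-y_n|}$ for each bounded $y_i$, balancing the $n-1$ Vandermonde factors $|2\sinh\pi(y_i-y_n)|$. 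So any induction through \rf{r14} must carry the full sum of pairwise distances, or at least $(n-1)(y_{\max}-y_{\min})$, which that sum dominates. This is exactly the Rains-lemma telescoping behind \rf{I1a}.

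There is also a slip in your sketch of that telescoping. The numerators $[4\sinh^2\pi(y_{ki}-y_{kj})]^g$ grow like $e^{2\pi g\sigma(\by_k)}$ with $\sigma(\by_k)=\sum_{i<j}|y_{ki}-y_{kj}|$, not like $e^{\pi g\sigma(\by_k)}$. It is with this factor $2$ that Rains' inequality $\sum_{i,j}|y_{k+1,i}-y_{kj}|\ge\sigma(\by_{k+1})+\sigma(\by_k)$ (Lemma \ref{lem4.1}) telescopes the exponent to $\pi g\bigl(\sigma(\by_1)-\sigma(\bx_n)\bigr)=-\pi g\sum_{i<j}|x_i-x_j|$.
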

		\begin{proof} Since the wave function $\Psi_{\bl_n}(\by_n)$ has no singularities for real $y_k$, in particular, at hyperplanes $y_i=y_j$, the integrable for $\Re g>0$ singularities $y_i=x_k$ do not affect the convergence of the total integral. The asymptotics of the kernel of the operator $Q^*_{n,+}(\mu)$ with respect to the variable $y_n$ tending to infinity is proportional to
			\beqq e^{\pi (-2i\mu+ng-1)y_n}.
			\eeqq
			On the other hand, we have a total bound of the wave function $\Psi_{\bl_n}(\by_n)$ (see \cite{BCDK}[Theorem 2.2])
			\beq \label{I1a} \Psi_{\bl_n}(\bx_n)<C(\bl)e^{-\pi g(\sum_{i<j}|y_i-y_j|)},\eeq
		so whenever the integration variables $y_1,\ldots, y_{n-1}$ vary over a compact domain, the whole integrand is bounded by
		\beqq Ce^{\pi \left((-2i\mu+ng-1)- (n-1)g\right)y_n}=Ce^{\pi \left((-2i\mu+(g-1)\right)y_n} \eeqq
		 which proves the convergence of the integral presenting the action of the operator $Q^*_{n,+}(\mu)$ with additional assumption $\Re g<1$. The bounds for the operator $Q^*_{n,-}(\mu)$ are similar.
		 \end{proof}
		
		Define also raising operator $\Lambda^*_n(\mu)$,  which transforms symmetic functions of $n-1$ real variables to symmetic functions of $n$ real variables
	\beq\label{I2}\begin{split} \Lambda^*_{n}(\mu)f(\bx_n)=d_{n-1}^*\prod_{i<j}|2\sh\pi(x_i-x_j)|^{-2g+1}\hspace{-1.4cm}\int\limits_ {x_1<y_1<\ldots<y_{n-1}<x_n}\hspace{-1.2cm} d\by_{n-1} e^{2\pi i\mu(\bbx_n-\bby_{n-1})}\times\\  \prod_{i<j}|2\sh\pi(y_i-y_j)|\prod_{i=1}^n\prod^{n-1}_{j=1} |2\sh\pi(x_i-y_j)|^{g-1}f(\by_{n-1}).
	\end{split}\eeq
	Contrary to \rf{pair3}, the sole raising operator is symmetric
\beq\label{pair4}  \langle \Lambda^*_{n}(\mu)f,g\rangle= \langle f, \Lambda^*_{n}(\mu) g\rangle,\qquad ( \Lambda^*_{n}(\mu)f,g)= ( f, \Lambda^*_{n,}(\mu) g).
\eeq
\color{black}

	 Since the wave function $\Psi_{\bl_{n-1}}(\bx_{n-1}) $ has no singularities for real $\bx_{n-1}$ and $\bl_{n-1}$, it belongs to the domain of definition of the operator $\Lambda^*_{n}(\mu)$ for real $\mu$.
	\begin{proposition}\label{proposition1} We have the following exchange  relation
	\beq\label{I4}	\Lambda_n(\l)\Lambda_{n-1}^*(\mu)=\Gamma(g\pm i(\l-\mu))\Lambda_n^*(\mu)\Lambda_{n-1}(\l) \eeq
	 under the condition $|\Im(\l-\mu)|<\Re g$.
	\end{proposition}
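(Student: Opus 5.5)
Both sides of \rf{I4} are integral operators from functions of $n-2$ variables $\bz_{n-2}$ to functions of $\bx_n$, so the plan is to compare their kernels. The left side has kernel
\beqq d_{n-1}d^*_{n-2}\int_{\R^{n-1}} d\by_{n-1}\, e^{\const\l(\bbx_n-\bby_{n-1})}K(\bx_n,\by_{n-1})\mu(\by_{n-1})\,\Lambda^*(\by_{n-1},\bz_{n-2};\mu),\eeqq
where $\Lambda^*(\by_{n-1},\bz_{n-2};\mu)$ is supported on $y_1<z_1<\ldots<z_{n-2}<y_{n-1}$. The right side has kernel
\beqq d^*_{n-1}d_{n-2}\int d\bt_{n-1}\,\Lambda^*(\bx_n,\bt_{n-1};\mu)\,e^{\const\l(\bbt_{n-1}-\bbz_{n-2})}K(\bt_{n-1},\bz_{n-2})\mu(\bz_{n-2}),\eeqq
with $\bt_{n-1}$ interlacing $\bx_n$. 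After cancelling the common plane-wave factors, whose exponents reduce to $e^{\const(\l\bbx_n-\mu\bbz_{n-2})}$ times $e^{\const(\mu-\l)\bby_{n-1}}$ or $e^{\const(\mu-\l)\bbt_{n-1}}$, the claim becomes a single identity. On one side there is an integral over $\R^{n-1}$ restricted only by the interlacing of $\by$ with the fixed $\bz_{n-2}$. On the other there is an integral over the region where $\bt$ interlaces the fixed $\bx_n$. In both integrands a product of $(2\ch\pi(\cdot))^{-g}$ is matched against a product of $|2\sh\pi(\cdot)|^{g-1}$.

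This is an instance of a ``Dirichlet-type'' (Dixon--Anderson) interlacing integral identity of hyperbolic type, which I would obtain as the rational/hyperbolic degeneration of Rains' elliptic transformation identities. Concretely, I pass to multiplicative variables $X_i=e^{2\pi x_i}$, $Y_j=e^{2\pi y_j}$. Then $2\sh\pi(x-y)=(X-Y)/\sqrt{XY}$ and $2\ch\pi(x-y)=(X+Y)/\sqrt{XY}$, so after the substitution $Y\mapsto -Y$ the $\ch$-kernel becomes a difference kernel. The Vandermonde factors $\mu(\by)=|2\sh|^{2g}$ combined with $|2\sh|^{1-2g}$ in $\Lambda^*$ produce the Anderson-type weight. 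The key step is the Anderson/Dixon integral
\beqq \int_{\bt\ \mathrm{interlacing}\ \bx}\prod_{i<j}(t_j-t_i)\prod_{i,j}|x_i-t_j|^{s_i-1}\,d\bt=\frac{\prod\Gamma(s_i)}{\Gamma(\sum s_i)}\prod_{i<j}(x_j-x_i)^{s_i+s_j-1},\eeqq
used in its ``integral transform'' form: the Dixon--Anderson kernel maps products $\prod_j (a-t_j)^{-g}$ to products in $\bx$ with explicit Gamma prefactors. I would first check the case $n=2$, where $\bz$ is empty. There both sides are one-dimensional integrals, and the identity reduces to a beta integral
\beqq \int_{x_1}^{x_2}e^{\const(\mu-\l)t}\,|2\sh\pi(x_1-t)|^{g-1}|2\sh\pi(x_2-t)|^{g-1}dt\eeqq
versus $\int_\R e^{\const(\mu-\l)y}(2\ch\pi(x_1-y))^{-g}(2\ch\pi(x_2-y))^{-g}dy$. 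Both are hypergeometric in $e^{2\pi(x_2-x_1)}$, and their ratio is $\Gamma(g\pm i(\l-\mu))$ times the constants. This fixes the normalisations $d_n,d^*_n$.

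For general $n$ I would proceed by a Fubini exchange argument that mirrors \cite{BCDK}. I insert a Dixon--Anderson representation of the product $K(\bx_n,\by_{n-1})$ (or of the $\sh$-weights) as an auxiliary interlacing integral and swap orders. The inner integral is then evaluated by the Anderson identity. The $\l-\mu$ dependence arises from a one-dimensional Fourier/beta integral of the form $\int e^{\const(\mu-\l)s}(2\ch\pi s)^{-g}\ldots$, which gives $\Gamma(g\pm i(\l-\mu))$ and converges exactly when $|\Im(\l-\mu)|<\Re g$. Alternatively, I would cite the degenerate Rains identity directly, since it is a statement of exactly this ``full-line versus interlaced'' form.

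The main obstacle is this step: producing, or correctly specialising, the multivariate hyperbolic identity with the right parameters and signs, since the $\ch$ kernel needs the $Y\mapsto -Y$ twist. Tracking the constants $d_{n-1}d^*_{n-2}$ against $d^*_{n-1}d_{n-2}$ and the $n!$ symmetrisations is routine by comparison. The other delicate point is justifying the Fubini interchanges and the analytic continuation in $\l-\mu$ from real values to the strip. For this I would use the exponential bounds on the kernels, as in Proposition \ref{propositionQ}.
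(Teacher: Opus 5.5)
Your reduction to a kernel identity matches the paper's. The kernel of $\Lambda_n(\l)\Lambda^*_{n-1}(\mu)$ is a full-line integral over $\by_{n-1}$ interlacing the fixed $\bz_{n-2}$. The kernel of $\Lambda^*_n(\mu)\Lambda_{n-1}(\l)$ is an integral over $\bt_{n-1}$ interlacing $\bx_n$. The identity between them is to come from Rains' Euler-type identity in the variables $e^{2\pi x}$. The gap is that the step which actually constitutes the proof is not supplied; you flag it yourself as the main obstacle.

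Your reason for being able to ``cite it directly'' is also wrong. The identity \eqref{Rains}, equivalently \eqref{Q5}, has bounded domains on both sides, $x_0<y_1<\ldots<y_n<x_n$ and $z_0<y_1<\ldots<y_m<z_m$, and contains no plane wave. Its only Gamma factors are $\prod_i\Gamma(\alpha_i)$ and $\prod_j\Gamma(\beta_j)$. The full line, the exponentials and $\Gamma(g\pm i(\l-\mu))$ must all be manufactured by a degeneration.

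Your alternative route does not work as sketched. Anderson's formula in transform form evaluates $\int\prod_{i<j}(t_j-t_i)\prod_{i,j}|x_i-t_j|^{s_i-1}\prod_j|u-t_j|^{-\sigma}d\bt$ as a product only when $\sigma=\sum_i s_i$. For the weights of $\Lambda^*_n$ this means $\sigma=ng$, whereas the $\cosh$-kernel has exponent $g$. For $\sigma\ne\sum_i s_i$ the result is a Lauricella function, not a product. In addition, each integration variable carries $n-2$ or $n$ such factors, plus a plane wave. The one-dimensional integral you name, $\int_\R e^{2\pi i\g s}(2\cosh\pi s)^{-g}ds$, equals $\ccK(\g)/(2\pi\Gamma(g))$. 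That is the factor of \eqref{s20}, built from $\Gamma(\frac g2\pm i\g)$, and it converges only for $|\Im\g|<\frac12\Re g$.

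What is needed, and what the paper does, is the specialisation \eqref{Q6}. Take $m=n$ in \eqref{Q5} with $\alpha_0=\ldots=\alpha_n=g$, $\beta_1=\ldots=\beta_{n-1}=g$, $\beta_0=g-\rho$, $\beta_n=g+\rho$; balancing holds since $\beta_0+\beta_n=2g$. Put $z_0=-Z$, $z_n=Z$ and let $Z\to+\infty$. This yields the relation one level up, for $\Lambda_{n+1}\Lambda^*_n$. The limit does three things:
\begin{itemize}
\item The endpoint factors $(2\cosh\pi(Z\pm y_i))^{-(g\mp\rho)}$ on the bounded side and $|2\sinh\pi(Z\pm y_i)|^{g\mp\rho-1}$ on the other side become $e^{2\pi\rho\sum_i y_i}$ and $e^{-2\pi\rho\sum_i y_i}$. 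This is after removing the common growth $e^{-(4n+2)\pi gZ}$ and the outer exponentials coming from the prefactors.
\item The domain $z_0<y_1<z_1<\ldots<z_{n-1}<y_n<z_n$ opens up to the full-line interlacing domain $y_1<z_1<\ldots<z_{n-1}<y_n$. Passing to the limit uses absolute convergence of the limiting integral.
\item The ratio $\prod_j\Gamma(\beta_j)/\prod_i\Gamma(\alpha_i)=\Gamma(g\pm\rho)/\Gamma^2(g)$ combines with the ratio $\Gamma^2(g)$ of the normalisations $d_{n-1}d^*_{n-2}$ and $d^*_{n-1}d_{n-2}$ (after symmetrisation). This gives exactly $\Gamma(g\pm i(\l-\mu))$ for $\rho=i(\l-\mu)$.
\end{itemize}
The conditions $\Re\beta_0,\Re\beta_n>0$ are precisely $|\Im(\l-\mu)|<\Re g$, so no continuation from real $\l-\mu$ is needed.

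The asymmetric choice $g\mp\rho$ is also what produces the opposite-sign plane waves that your bookkeeping misses. In your notation, remove the common factor $e^{\const\l(\bbx_n-\bbz_{n-2})}$ and set $\g=\l-\mu$. Then the interlaced side carries $e^{\const\g(\sum_i t_i-\sum_i x_i)}$ and the full-line side carries $e^{\const\g(\sum_j z_j-\sum_i y_i)}$. So the $\bt$-variables carry $e^{\const(\l-\mu)\sum_i t_i}$, not $e^{\const(\mu-\l)\sum_i t_i}$, and the outer plane waves of the two compositions are not the same.
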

	\begin{proof} Recall  Rains integral identity in Euler rational degeneration regime \cite{R}
	\beq\label{Rains}\begin{split}
		\frac{\prod_{0\leq i<j\leq n}|a_i-a_j|^{1-\alpha_i-\alpha_j}}
		{\prod_{0\leq i \leq n}\Gamma(\alpha_i)
			\prod_{0\leq j \leq m}|a_i+b_j|^{\alpha_i}}
		&\int\limits_{a_{0}<t_1<\ldots<t_n<a_n}\hspace{-0.8cm} d \bt_n
		\Delta_n(\bt_n)\, \prod_{i=1}^n\, 
		\frac{\prod_{0\leq j\leq n}|a_j-t_i|^{\alpha_j-1}}
		{\prod_{0\leq j \leq m}|t_i+b_j|^{\beta_j}}   = \\
		\frac{\prod_{0\leq i<j\leq m}|b_i-b_j|^{1-\beta_i-\beta_j}}
		{\prod_{0\leq i \leq m}\Gamma(\beta_i)
			\prod_{0\leq j \leq n}|b_i+a_j|^{\beta_i}}
		&\int\limits_{b_{0}<t_1<\ldots<t_m<b_m}\hspace{-0.8cm} d \bt_m
		\Delta_m(\bt_m)\, \prod_{i=1}^m\, 
		\frac{\prod_{0\leq j\leq m}|b_j-t_i|^{\beta_j-1}}
		{\prod_{0\leq j \leq n}|t_i+a_j|^{\alpha_j}},
	\end{split}\eeq
	where \beq\label{I5}\begin{split}&-b_m<-b_{m-1}<\cdots<-b_0<a_0<a_1<\cdots<a_n,\\ 
	&\alpha_0+\cdots+\alpha_n = \beta_0+\cdots+\beta_m,\qquad \Re\a_i>0,\qquad\Re\b_i>0,   \end{split}\eeq
	\begin{align}
		\Delta_n(t) = \prod_{1\leq i<j\leq n}|t_i-t_j|.
	\end{align}	
	Set
	\beqq a_i=e^{2\pi x_i},\qquad b_i=e^{2\pi z_i},\qquad t_i=e^{2\pi y_i}.\eeqq
	In these variables Rains identity looks as 
	\beq\label{Q5}\begin{split}
		\frac{(2\pi)^n\!\!\!\prod\limits_{0\leq i<j\leq n}|2\sh\pi(x_i-x_j)|^{1-\alpha_i-\alpha_j}}
		{\prod\limits_{0\leq i \leq n}\Gamma(\alpha_i)\!
			\prod\limits_{0\leq j \leq m}|2\ch\pi(x_i-z_j)|^{\alpha_i}}\hspace{-0.2cm}
		\int\limits_{x_{0}<y_1<\ldots<y_n<x_n}\hspace{-0.8cm} d \by_n
		\tilde{\Delta}_n(y)\, \prod_{i=1}^n\, 
		\frac{\prod\limits_{0\leq j\leq n}|2\sh\pi(x_j-y_i)|^{\alpha_j-1}}
		{\prod\limits_{0\leq j \leq m}|2\ch\pi(z_j-y_i)|^{\beta_j}}   = \\
		\frac{(2\pi)^m\!\!\!\prod\limits_{0\leq i<j\leq m}|2\sh\pi(z_i-z_j)|^{1-\beta_i-\beta_j}}
		{\prod\limits_{0\leq i \leq m}\Gamma(\beta_i)\!
			\prod\limits_{0\leq j \leq n}|2\ch\pi(z_i-x_j)|^{\beta_i}}\hspace{-0.2cm}
		\int\limits_{z_{0}<y_1<\ldots<y_m<z_m}\hspace{-0.8cm} d \by_m
		\tilde{\Delta}_m(y)\, \prod_{i=1}^m\, 
		\frac{\prod\limits_{0\leq j\leq m}|2\sh\pi(z_j-y_i)|^{\beta_j-1}}
		{\prod\limits_{0\leq j \leq n}|2\ch\pi(y_i-x_j)|^{\alpha_j}},
	\end{split}\eeq
	where $\label{Delta} \tilde{\Delta}_n(y)$ is defined in \rf{Delta}.
		
	Now we  choose $n=m$ in Rains identity \rf{Q5}; take the parameters in natural order
	\beqq x_0<x_1\ldots<x_n,\qquad z_0<z_1\ldots<z_n\eeqq
	and set
	\beq\label{Q6}\begin{split}
		\alpha_i&=g,\qquad i=0,\ldots,n,\qquad\qquad \beta_j=g,\qquad j=1,\ldots,n-1,\\ \beta_0&=g-\rho,\qquad \beta_n=g+\rho,\qquad
		z_0=-Z,\qquad z_n=Z. \end{split}\eeq
	Note that according to \rf{I5},
	\beqq |\Re \rho|<g. \eeqq	
	Then the asymptotics of LHS of \eqref{Q5} when $Z$ tends to infinity are
	\beq \label{Q8}\begin{split}\frac{(2\pi)^n\prod\limits_{0\leq i<j\leq n}|\sh\pi(x_i-x_j)|^{1-2g}}{\Gamma^{n+1}(g)\prod\limits_{i=0}^{ n}\prod\limits_{j=1}^{ n-1}|\ch\pi(x_i-z_j)|^g}\times e^{-(4n+2)g\pi Z}\times &
		\\ \int\limits_{ x_0<y_1<\ldots <y_n<x_n}
		\hspace{-0.8cm} d \by_ne^{2\pi\rho\sum\limits_{j=1}^n y_j}
		&\tilde{\Delta}_n(y)\,  
		\frac{\prod\limits_{i=1}^n\prod\limits_{j=0}^{n}|2\sh\pi(x_j-y_i)|^{g-1}}
		{\prod\limits_{i=1}^n\prod\limits_{j=1}^{n-1}|2\ch\pi(z_j-y_i)|^{g}} .
	\end{split}\eeq
	The asymptotics of RHS of \eqref{Q5} when $Z$ tends to infinity are
	\beq \label{Q9}\begin{split}\frac{(2\pi)^n\prod\limits_{1\leq i<j\leq {n-1}}|\sh\pi(z_i-z_j)|^{1-2g}}{\Gamma^{n-1}(g)\Gamma(g+\rho)\Gamma(g-\rho)\prod\limits_{i=0}^{ n}\prod\limits_{j=1}^{ n-1}|\ch\pi(x_i-z_j)|^g}&\times e^{-(4n+2)g\pi Z+2\pi \rho\big(\sum_ix_i+\sum_jz_j\big)}\times 
		\\ \int\limits_{ -Z<y_1<z_1\ldots<z_{n-1} <y_n<Z}
		\hspace{-0.8cm} d \by_ne^{-2\pi\rho\sum\limits_{j=1}^n y_j}
		&\tilde{\Delta}_n(y)\,  
		\frac{\prod\limits_{i=1}^n\prod\limits_{j=1}^{n-1}|2\sh\pi(z_j-y_i)|^{g-1}}
		{\prod\limits_{i=1}^n\prod\limits_{j=0}^{n}|2\ch\pi(x_j-y_i)|^{g}} .
	\end{split}\eeq 
	The integral in the RHS of \rf{Q8} converges absolutely for any $g>0$ and $|\Re \rho|<g$,  
	thus the integral 
	\beqq   \int\limits_{ -\infty<y_1<z_1\ldots<z_{n-1} <y_n<\infty}
	\hspace{-0.8cm} d \by_ne^{-2\pi\rho\sum\limits_{j=1}^n y_j}
	\tilde{\Delta}_n(y)\,  
	\frac{\prod\limits_{i=1}^n\prod\limits_{j=1}^{n-1}|2\sh\pi(z_j-y_i)|^{g-1}}
	{\prod\limits_{i=1}^n\prod\limits_{j=0}^{n}|2\ch\pi(x_j-y_i)|^{g}} \eeqq
	converges absolutely as well.
	
	Comparing \eqref{Q8} and \eqref{Q9} and setting $\rho=i\gamma$, we come to the equality
	\beqq \label{Q10}\prod\limits_{0\leq i<j\leq n}\!\!|\sh\pi(x_i- x_j)|^{1-2g}\hspace{-0.8cm} 
		\int\limits_{ x_0<y_1<\ldots <y_n<x_n}
		\hspace{-0.8cm} d \by_ne^{2\pi i\gamma(\sum\limits_{j=1}^n y_j-\sum\limits_{j=0}^nx_j)}
		\tilde{\Delta}_n(y)\,  
		\frac{\prod\limits_{i=1}^n\prod\limits_{j=0}^{n}|2\sh\pi(x_j-y_i)|^{g-1}}
		{\prod\limits_{i=1}^n\prod\limits_{j=1}^{n-1}|2\ch(z_j-y_i)|^{g}}=
		\eeqq
	\beq \label{Q10}\begin{split}	
		& 
		\frac{\Gamma^2(g)}{\Gamma(g\pm i\gamma)}{\prod\limits_{1\leq i<j\leq n-1}\!\!|\sh\pi(z_i-z_j)|^{1-2g}}\times\\ 
	&	\int\limits_{ y_1<z_1<\ldots <z_{n-1}<y_n}
		\hspace{-0.8cm} d \by_ne^{2\pi i\gamma(\sum\limits_{j=1}^{n-1} z_j-\sum\limits_{j=1}^ny_j)}
		\tilde{\Delta}_n(y)\,  
		\frac{\prod\limits_{i=1}^n\prod\limits_{j=1}^{n-1}|2\sh\pi(z_j-y_i)|^{g-1}}
		{\prod\limits_{i=1}^n\prod\limits_{j=0}^{n}|2\ch\pi(x_j-y_i)|^{g}}.	 
	\end{split}\eeq

	 It is equivalent to exchange relation \rf{Q8} on the level of equality of kernels of compositions of operators in both sides of equality.
    	\end{proof}
	\begin{proposition}\label{proposition1a} For real $\bx_n$, $\bl_{n-1}$, and $\mu$
		\beq \label{Q25}\left(\Lambda^*_n(\mu)\Psi_{\bl_{n-1}}\right)(\bx_n)= \prod_{j=1}^{n-1}\Gamma^{-1}(\pm i(\mu-\l_j))
		\Psi_{\mu,\bl_{n-1}}(\bx_n) \eeq
		\end{proposition}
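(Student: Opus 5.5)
We argue by induction on $n$, using the exchange relation \rf{I4} of Proposition \ref{proposition1} together with the factorization \rf{r16} of the wave function through raising operators. The base case $n=2$ is a direct computation. Here $\Psi_{\l_1}(y)=e^{\const\l_1 y}$ and
\[
\Lambda^*_2(\mu)e^{\const\l_1 y}
\]
is a one-dimensional integral over $x_1<y<x_2$ of $|2\sh\pi(x_1-y)|^{g-1}|2\sh\pi(x_2-y)|^{g-1}$ times exponentials. We would compare it with $\Psi_{\mu,\l_1}(\bx_2)=\Lambda_2(\mu)e^{\const\l_1y}$, the integral of $(2\ch\pi(x_1-y))^{-g}(2\ch\pi(x_2-y))^{-g}$. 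Alternatively, we obtain the base case from the $n=1$ version of \rf{I4}, setting $\Lambda^*_1(\mu)=e^{\const\mu x}$, which is exactly the $n=1$ case of Rains' identity \rf{Q5} after the degeneration $Z\to\infty$. The Gamma-factor $\Gamma^{-1}(\pm i(\mu-\l_1))$ should then come out of the constants $d^*_1$, $d_1$ and the factor $\Gamma^2(g)/\Gamma(g\pm i\gamma)$ in \rf{Q10}.

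For the inductive step, we write $\Psi_{\bl_{n-1}}=\Lambda_{n-1}(\l_{n-1})\Psi_{\bl_{n-2}}$ and apply \rf{I4} in the reverse direction:
\[
\Lambda^*_n(\mu)\Lambda_{n-1}(\l_{n-1})=\Gamma^{-1}(g\pm i(\l_{n-1}-\mu))\,\Lambda_n(\l_{n-1})\Lambda^*_{n-1}(\mu).
\]
By the induction hypothesis,
\[
\Lambda^*_{n-1}(\mu)\Psi_{\bl_{n-2}}=\prod_{j\le n-2}\Gamma^{-1}(\pm i(\mu-\l_j))\,\Psi_{\mu,\bl_{n-2}}.
\]
Then
\[
\Lambda_n(\l_{n-1})\Psi_{\mu,\bl_{n-2}}=\Psi_{\mu,\bl_{n-2},\l_{n-1}},
\]
which equals $\Psi_{\mu,\bl_{n-1}}$ by the symmetry of $\Psi$ in the spectral variables, a consequence of \rf{s19}.

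The Gamma-factors must be reconciled. Formula \rf{I4} produces $\Gamma(g\pm i(\cdot))$, while \rf{Q25} has $\Gamma(\pm i(\cdot))$. So we must check that the normalizations in \rf{I4} and \rf{Q25} match. The likely resolution is that the base case yields $\Gamma^{-1}(\pm i(\mu-\l_1))$ up to a ratio that telescopes. We would recompute the constant in \rf{Q10} carefully, including the $d_n$ versus $d^*_n$ normalizations, and possibly also use the relation $\Gamma(g\pm i\gamma)\,\ccK$-type identities. This bookkeeping of constants is the first thing we would verify.

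The main obstacle is justifying the operator manipulations: applying \rf{I4}, which is proved at the level of kernels, to the functions $\Psi_{\bl_{n-2}}$, and interchanging the order of integration. For this we would use the bound \rf{I1a} and the absolute convergence established in the proof of Proposition \ref{proposition1}, which together justify Fubini for real spectral parameters with $0<g$. If needed, we would first prove the identity for $\Re g\in(0,1)$, where Proposition \ref{propositionQ}-type estimates apply, and then extend by analyticity in $g$.
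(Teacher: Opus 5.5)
Your inductive architecture is the paper's: you peel off $\Lambda_{n-1}(\l_{n-1})$, commute it past $\Lambda^*_n(\mu)$ with \rf{I4}, apply the inductive hypothesis, and rebuild the wave function with $\Lambda_n(\l_{n-1})$. The two points you postpone, however, are exactly where the proposal does not close.

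\textbf{The Gamma factors.} No telescoping can reconcile \rf{I4} with the printed formula, so you should not look for one. The bookkeeping is exact. Since $d_1^*\Gamma^2(g)/d_1=1$, the $n=1$ instance of the kernel identity \rf{Q10} gives precisely $\Lambda^*_2(\mu)e^{\const\l_1y}=\Gamma^{-1}(g\pm i(\mu-\l_1))\Psi_{\mu,\l_1}$. Every inductive step then contributes exactly the factor $\Gamma^{-1}(g\pm i(\mu-\l_{n-1}))$ of \rf{I4}, with no leftover constant. Each factor depends on its own difference $\mu-\l_j$, and $\Gamma(g\pm it)/\Gamma(\pm it)$ is not constant in $t$ (for $g=1$ it equals $t^2$). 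In fact the formula without $g$ cannot hold. For $n=2$ and $\mu=\l_1$ the left-hand side is a phase times a strictly positive integral, whereas $\Gamma^{-1}(0)=0$ makes the right-hand side vanish. The correct factor is $\prod_{j=1}^{n-1}\Gamma^{-1}(g\pm i(\mu-\l_j))$. The paper's own \rf{Q27} and Theorem \ref{theorem1} also drop the $g$, while \rf{I4}, \rf{00} and Corollary \ref{corollary1} carry it. You should prove the corrected statement.

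\textbf{Justification of Fubini.} This is where the paper's proof spends its effort, and your appeal to ``the absolute convergence established in the proof of Proposition \ref{proposition1}'' does not supply it. That result gives convergence of the $\by$-integral at fixed external variables, i.e.\ of the composed kernel only. Fubini for \rf{Q29} needs joint integrability in $(\by,\bz)$ over an unbounded domain. This domain contains clustered collisions $y_i=z_i=y_{i+1}=\cdots$, where $\prod|2\sinh\pi(y_i-z_j)|^{g-1}$ is singular when $0<g<1$. The paper controls the collisions by Lemma \ref{le_lst}, and infinity by Rains' identity \rf{Q34} together with the quoted bound on $S_n$.

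Your fallback of first treating $\Re g\in(0,1)$ points the wrong way. That is exactly the range with collision singularities. The restriction in Proposition \ref{propositionQ} comes from the unbounded integration domain of $Q^*_{n,\pm}$, which $\Lambda^*_n$ does not have.

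A short way to close the gap is positivity:
\begin{itemize}
\item For real parameters, the modulus of the integrand of \rf{Q29} is the $\gamma=0$ integrand times $|\Psi_{\bl_{n-2}}(\bz)|$.
\item Integrate first in $\by$ (Tonelli) and use the positive $\gamma=0$ case of \rf{Q10} with $n$ replaced by $n-1$. Writing $I_{28}$, $I_{29}$ for the integrals over $(\by,\bz)$ of the moduli of the integrands of \rf{Q28} and \rf{Q29}, this gives $I_{29}=c\,I_{28}$ for a constant $c$.
\item $I_{28}<\infty$: its $\by$-domain is bounded with only one-dimensional singularities $|x_i-y_j|^{g-1}$, and the inner $\bz$-integral converges by \rf{I1a}.
\end{itemize}
Fubini and \rf{Q10} at the actual $\gamma$ then give the inductive step for every $g>0$.
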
 
		\begin{proof}
			The proof is going by induction with the use of Proposition \ref{proposition1}.
			We have 
		\beq \label{Q26}	\left(\Lambda^*_n(\mu)\Psi_{\bl_{n-1}}\right)(\bx_n)=\left(\Lambda^*_n(\mu)\Lambda_{\bl_{n-1}}\Psi_{\bl_{n-2}}\right)(\bx_n)\eeq
		Assuming that we can use Proposition \ref{proposition1} in a presence of $\Psi_{\bl_{n-2}}(\by_{n-2})$ (this requires a check of convergences of the related integrals to apply the Fubini's theorem), we get
		\beq\label{Q27} \left(\Lambda^*_n(\mu)\Lambda_{{n-1}}(\l_n)\Psi_{\bl_{n-2}}\right)(\bx_n)=\Gamma^{-1}(\pm i(\mu-\l_n))\left(\Lambda_{{n}}(\l_n)\Lambda^*_{n-1}(\mu)\Psi_{\bl_{n-2}}\right)(\bx_n).
		\eeq
		By induction assumption, 
		\beqq \left(\Lambda^*_{n-1}(\mu)\Psi_{\bl_{n-2}}\right)(\by_{n-1})=\prod_{j=1}^{n-2}\Gamma^{-1}(\pm i(\mu-\l_j)
		\Psi_{\mu,\bl_{n-2}}(\by_{n-1})\eeqq
		which implies the statement of Proposition due to symmetricity of $\Psi_{\bl_n}(\bx_n)$ with respect to the spectral parameters.
		
		The use of Proposition \ref{proposition1} requires a check of convergences of multiple integrals entering into the definition of both expressions
		\beq\label{Q28} \left(\Lambda^*_n(\mu)\Lambda_{n-1}(\l_{n-1})\Psi_{\bl_{n-2}}\right)(\bx_n) \eeq
		  and
		  \beq \label{Q29} \left(\Lambda_n(\l_{n-1})\Lambda_{n-1}^*(\mu)\Psi_{\bl_{n-2}}\right)(\bx_n) .\eeq
		  For the study of the finite singularities, we will systematically apply the following result.
		  
		  \begin{lemma}\label{le_lst}
		  For $a \in (-1,+\infty)$, we have the bound
		  \beq\begin{split}
		  \int\limits_{x_1<y_1<\ldots <x_n}d\by_{n-1}\prod^n_{i=1}\prod^{n-1}_{j=1}|x_i-y_j|^a\times\prod_{1\leq i<j<n-1}|y_i-y_j| \leq C_n(a)\prod_{1\leq i < j \leq n}|x_i-x_j|^{2a+1}.
		  \end{split}\eeq
		  \end{lemma}
		  The proof will be given in Appendix. 
		  
		   Firstly, consider the  integral \rf{Q28}. It can  be viewed as an integral over the bounded set $x_1<x_2<\ldots<x_n$ with the kernel
		  \beqq |2\sh\pi(x_i-x_j)|^{-2g+1} e^{2\pi i\mu(\bbx_n-\bby_{n-1})}\times\\  \prod_{i<j}|2\sh\pi(y_i-y_j)|\prod_{i=1}^n \prod_{j=1}^{n-1}|2\sh\pi(x_i-y_k)|^{g-1} \Psi_{\bl_{n-1}}(\by_{n-1}).\eeqq
		  The integral has singularities at hyperplanes $y_j=x_{j}$ and $y_j=x_{j+1}$ , which are integrable for $g>0$. Since the function $\Psi_{\bl_{n-1}}(\by_{n-1})$ is also given by convergent integral \cite{BDKK2}, the whole expression \rf{Q28} is presented by a convergent integral as well.		  
		  
		   In the integral \rf{Q29} we also meet singularities at hyperplanes, when $0<g\leq 1$, and at infinity for any $g>0$. The integrand looks as
		   \beqq \begin{split}e^{2\pi i(\l_{n-1}(\bbx_n-\bby_{n-1})+\mu(\bby_{n-1}-\bbz_{n-2}))} \prod_{i=1}^n\prod_{j=1}^{n-1}(2\ch\pi(x_i-y_j))^{-g}\prod_{i<j}|
		   2\sh\pi(y_i-y_j)|^{2g}\times \\ \prod_{i<j}|
		   		   2\sh\pi(y_i-y_j)|^{-2g+1}
		   \prod_{i=1}^{n-1}\prod_{j=1}^{n-2}|2\sh\pi(y_i-z_j)|^{g-1}\prod_{i<j}
		   		   |2\sh\pi(z_i-z_j)|{}\Psi_{\bl_{n-2}}(\bz_{n-2})\end{split}\eeqq	   
		   	Now the integration variables $y_j$ vary over $\R^{n-1}$, and we may assume that they are ordered $y_1<y_2<\ldots <y_{n-1}$,  while the variables $z_j$ respect the	interlacing condition
		   	\beqq y_1<z_1<\ldots <z_{n-2}<y_{n-1}. \eeqq   
		   	Contrary to \rf{Q28}, the integral \rf{Q29} for $0<\Re g<1$ may have multiple singularities. Each finite singularity locus is a product of strata \rf{Q28a}
		   	 \beq\label{Q28a} \begin{split}& (I)\ \  y_i=z_i=\ldots= z_j=y_{j+1},\qquad \text{or}\qquad(II)\ \ y_i=z_i=\ldots=y_j =z_j \\& (III)\ \  z_i=y_{i+1}=\ldots =y_j=z_j\qquad \text{or}\qquad  (IV)\ \ z_i=y_{i+1}=\ldots= z_j=y_{j+1}.\end{split}\eeq
		   		   	 We can read the statement of Lemma \ref{le_lst} as a set of  inequalities
		   	\beq\label{Q28b}\begin{split}
		   			  \int\limits_{y_1<z_1<\ldots<y_m}d\bz_{m-1}\prod_{1\leq i < j \leq n}|y_i-y_j|^{1-2g}\prod^m_{i=1}\prod^{m-1}_{j=1}|y_i-z_j|^{g-1}\prod_{1\leq i<j<m-1}|z_i-z_j| \leq C.
		   			  \end{split}\eeq

		\color{black}

		   	Due to the known bound \rf{I1a} on the wave function $\Psi_{\bl_{n-2}}(\bz_{n-2})$ and equivalences  $\sh x\sim x$, $x\to 0$, it means that 
		   	all multiple singularities 
		   	 	are integrable and one can input them into an open set of fixed volume in the corresponding linear subspace of the integration variables $y_i,\ldots y_j$ and $z_l,\ldots z_k$ such that the integral over this  set will be bounded by  a fixed single number.

		   	 	 Then inside these sets we can replace the functions $|2\sh \pi(y_i-z_j)|^{g-1}$, which tend there to infinity, by  $C \exp \pi(g-1)|y_i-z_j|$ with proper  constants without changing total integral. After that we are in a position of the integral without finite singularities and with the same bounds at infinity.
		   	 	 
	Here we should note that for this goal inequality \rf{Q28b} can be used not only for singularity $(I)$ but for others as well just by adding new integration variable inside the chosen small vicinity $|y_i-y_j|<\ve$.  For instance,
		   	 	                   \beq\begin{split}
		   	 	                   y_i<z_i<\ldots<y_j<z_j \qquad \text{becomes} \qquad y_i<z_i<\ldots<y_j<z_j<\tilde{y}_{j+1},
		   	 	                   \end{split}\eeq
where $|\tilde{y}_{j+1}-y_j|=\ve$.
   
		    Then the integrand of \rf{Q29} modified by the procedure above can be bounded by the exponent of
		    \beq \begin{split} \label{Q29.2} \pi(-g\sum_{i=1}^n\sum_{j=1}^{n-1}|x_i-y_j| +\sum_{i<j}|y_i-y_j|+(g-1)\sum_{j=1}^{n-1}\sum_{k=1}^{n-2}|y_j-z^{(n-2)}_k|+ \\
		    + \sum_{1\leq i<j\leq n-2}|z^{(n-2)}_i-z^{(n-2)}_j|+
		     g S_{n-2}(\bz_{n-2},\ldots \bz_1)).\end{split}\eeq

			Here \beqq\by_{n-1}=(y_1,\ldots,y_{n-1}),\qquad \bz_{n-2}=(z_1^{(n-2)},\ldots, z_{n-2}^{(n-2)}),~\ldots,~ \bz_1=z_1^{(1)}\eeqq
			are integration variables with the only restriction
			\beq\label{Q30} y_1<z_1^{(n-2)}< y_2<\ldots< z_{n-2}^{(n-2)}<y_{n-1},\eeq
			where $ x_1,\ldots,x_n$  are external parameters 
			and the function $S_m (\bz_m,\bz_{n-1}.\ldots \bz_1)$ is defined by the recurrence relation
			\beq\label{Q40}\begin{split} & S_m (\bz_m,\bz_{m-1}.\ldots \bz_1)=\\ -  \sum_{i=1}^m\sum_{j=1}^{m-1} |z_i^{(m)}-z_j^{(m-1)}|+
			\sum_{\substack{i,j=1\\i\not=j}}^{m-1}|z_i^{(m-1)}-z_j^{(m-1)}|+& S_{m-1} (\bz_{m-1},\ldots \bz_1).\end{split}\eeq

   The inequality \rf{Q30} implies the relation \cite[Lemma 3.6]{R} 
    \beq\label{Q34}\begin{split}
    \sum_{1\leq i<j\leq n-1}|y_i-y_j| - \sum^{n-1}_{i=1}\sum^{n-2}_{j=1}|y_i-z^{(n-2)}_j| +\sum_{1\leq i<j\leq n-2}|z^{(n-2)}_i-z^{(n-2)}_j| = 0.
    \end{split}\eeq 
     Multiply \rf{Q34} by $2g-1$ and add to \rf{Q29.2}. Then \rf{Q29.2} turns to
\beq \begin{split}\pi g  S_n(\bx_n,\bz_{n-1}, \bz_{n-2},\ldots, \bz_1).
	\end{split}\eeq

	 We now refer to \cite{BDKK2}, where the following bound for this function was obtained
	  \beqq S_n(\bx_n,\bz_{n-1}, \bz_{n-2},\ldots, \bz_1)<C(\bx_n)- \ve \sum_{i=1}^{n-1}\sum_{j=1}^i |z_j^{(i)}|\eeqq
	  for some specific $\ve>0$. This is sufficient for the proof of convergency.
\end{proof}	
	
\begin{theorem}\label{theorem1}
	\beq \label{Q33}\Psi^*_{\bl_n}(\bx_n)=\prod_{k\not=j}\Gamma^{-1}(i(\l_j-\l_k))\Psi_{\bl_n}(\bx_n)\eeq
	\end{theorem}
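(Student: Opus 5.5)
The plan is to present $\Psi^*$ as an iterated application of the raising operators $\Lambda^*_k$ and then apply Proposition \ref{proposition1a} step by step.

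\textbf{Step 1: operator form of $\Psi^*$.} Compare \rf{01} with \rf{I2}. For $1\le k\le n$, the factor $\prod_{a<b}|2\sh\pi(y_{k,a}-y_{k,b})|^{2-2g}$ at level $k$ splits as $|\cdot|^{1-2g}\cdot|\cdot|^{1}$. The first piece is the prefactor of $\Lambda^*_k(\l_k)$ acting on the level-$k$ variables. The second piece is the Vandermonde-type factor $\prod|2\sh\pi(y_i-y_j)|$ of $\Lambda^*_{k+1}(\l_{k+1})$ in its integration variables. At the top level $k=n$ only the prefactor $|2\sh\pi(x_a-x_b)|^{1-2g}$ appears, which matches \rf{01}. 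The cross factors $|2\sh\pi(y_{k,i}-y_{k+1,j})|^{g-1}$ and the exponentials match directly. The constants $\prod_{a<b}2\pi/\Gamma(g)$ amount to $(2\pi/\Gamma(g))^{n(n-1)/2}=\prod_{k=2}^n d^*_{k-1}$. This gives, for $x_1<\dots<x_n$,
\[
\Psi^*_{\bl_n}(\bx_n)=\Lambda^*_n(\l_n)\Lambda^*_{n-1}(\l_{n-1})\cdots\Lambda^*_2(\l_2)\,e^{\const\l_1x_1}.
\]
I will check this bookkeeping carefully, in particular that for $k=1$ there is no $\sh$ factor and $\Psi_{\l_1}(x_1)=e^{\const\l_1x_1}$.

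\textbf{Step 2: induction via Proposition \ref{proposition1a}.} Induct on $n$, assuming
\[
\Lambda^*_{n-1}(\l_{n-1})\cdots e^{\const\l_1x_1}=\prod_{1\le j\ne k\le n-1}\Gamma^{-1}(i(\l_j-\l_k))\,\Psi_{\bl_{n-1}}.
\]
Applying $\Lambda^*_n(\l_n)$ and using \rf{Q25} with $\mu=\l_n$ produces the extra factor $\prod_{j=1}^{n-1}\Gamma^{-1}(\pm i(\l_n-\l_j))$. Together with the factor from level $n-1$, this gives exactly $\prod_{j\ne k}\Gamma^{-1}(i(\l_j-\l_k))$. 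The remaining function is $\Psi_{\l_n,\bl_{n-1}}=\Psi_{\bl_n}$ by the symmetry of $\Psi$ in the spectral parameters, which was already used in Proposition \ref{proposition1a}. This proves \rf{Q33} for real, pairwise distinct $\bl_n$.

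\textbf{Step 3: extension and main obstacle.} Both sides extend by analyticity and continuity in $\bl$; at coinciding $\l$'s the right-hand side has zeros from the $\Gamma^{-1}$ factors, which is consistent. The substantive issue is that Proposition \ref{proposition1a} is stated for real parameters and needs the Fubini justification. That justification is already contained in its proof through Lemma \ref{le_lst} and the bound on $S_n$, so I would simply verify that each intermediate stage of the iterated integral is absolutely convergent. The main obstacle is therefore the identification in Step 1: I must confirm that the nested integral \rf{01} is precisely the composition of the $\Lambda^*_k$, with exponents $2-2g=(1-2g)+1$ splitting correctly between consecutive operators, and that the order of integration may be exchanged. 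This follows from absolute convergence, which I would establish by the same hyperplane-singularity analysis as in the proof of Proposition \ref{proposition1a}.
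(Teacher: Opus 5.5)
Your route is the paper's own. Step~1 is exactly \eqref{Q24}, and your bookkeeping of $2-2g=(1-2g)+1$ and of $\prod_{k=2}^n d^*_{k-1}=(2\pi/\Gamma(g))^{n(n-1)/2}$ is right. Step~2 is the paper's inductive use of Proposition~\ref{proposition1a}.

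The problem is the constant, which you inherit from a misprint in Proposition~\ref{proposition1a} and in the statement itself. Proposition~\ref{proposition1a} is derived from Proposition~\ref{proposition1}, whose factor is $\Gamma(g\pm i(\lambda-\mu))$. Rains' identity produces $\Gamma^2(g)/\Gamma(g\pm i\gamma)$, and the normalizations $d^*_{n-1}d_{n-2}(n-2)!$ and $d_{n-1}(n-1)!\,d^*_{n-2}$ differ by exactly $\Gamma(g)^2$. Hence \eqref{Q25} and \eqref{Q27} must carry $\Gamma^{-1}(g\pm i(\mu-\lambda_j))$, and your induction actually gives
\[
\Psi^*_{\bl_n}(\bx_n)=\prod_{j\neq k}\Gamma^{-1}\bigl(g+i(\lambda_j-\lambda_k)\bigr)\,\Psi_{\bl_n}(\bx_n).
\]
This is also what \eqref{00} and Corollary~\ref{corollary1} require.

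The formula as printed is false. Take $n=2$, $g=1$, $x_1=-x<0<x_2=x$ and $\lambda=\lambda_1-\lambda_2$. A direct computation gives $\Psi^*=\frac{\sin 2\pi\lambda x}{\lambda\sinh 2\pi x}$ and $\Psi=\frac{\pi\sin 2\pi\lambda x}{\sinh\pi\lambda\,\sinh 2\pi x}$. So $\Psi^*/\Psi=\frac{\sinh\pi\lambda}{\pi\lambda}=\Gamma^{-1}(1\pm i\lambda)$, whereas $\Gamma^{-1}(\pm i\lambda)=\frac{\lambda\sinh\pi\lambda}{\pi}$.

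Your Step~3 remark is exactly where this should have surfaced, and it is wrong: the zeros are not consistent. For $n=2$ and $\lambda_1=\lambda_2$, the $y$-dependence of the exponential in \eqref{01} cancels. So $\Psi^*$ is a phase times the integral of a positive function and cannot vanish. Meanwhile $\Psi$ is finite there, so $\Gamma^{-1}(\pm i(\lambda_1-\lambda_2))\Psi$ has a double zero.

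With the $g$-shifted factors no zeros occur for real $\bl$. The rest of your plan then goes through unchanged: the Fubini justification taken from the proof of Proposition~\ref{proposition1a}, followed by continuation in $\bl$.
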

	\begin{proof} This is a direct corollary of the inductive use of the proposition above since	
		\beq \label{Q24}\Psi^*_{\bl_n}(\bx_n)=\Lambda^*_{n}(\l_n) \, \Lambda^*_{n-1}(\l_{n-1})\cdots \Lambda^*_{2}(\l_2) \, e^{2\pi i \l_1x_1}.
		\eeq
		\end{proof}
		Using the relation between the wave function $\Psi_{\bl_n}(\bx_n)$ and Heckman - Opdam hypergeometric function $\mathcal{F}_{\bl_n}(\bx_n)$ established in \cite{BCDK}, we arrive to the following corollary.
		\begin{corollary}\label{corollary1}
			\beq \label{02}\Psi^*_{\bl_n}(\bx_n)=  \prod_{k=1}^n\frac{\Gamma(g)}{\Gamma(kg)} \mathcal{F}_{\bl_n}(\bx_n)\eeq
		\end{corollary}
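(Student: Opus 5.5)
The statement to prove is Corollary \ref{corollary1}. The plan is to combine Theorem \ref{theorem1} with the normalization \rf{00} from \cite{BCDK}. Theorem \ref{theorem1} expresses $\Psi^*_{\bl_n}$ as a multiple of $\Psi_{\bl_n}$, and \rf{00} expresses $\mathcal{F}_{\bl_n}$ as a multiple of $\Psi_{\bl_n}$. Eliminating $\Psi_{\bl_n}$ between the two should give the corollary, provided the Gamma prefactors match.

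The first step is to solve \rf{00} for the wave function:
\beqq \Psi_{\bl_n}(\bx_n)=\a_n^{-1}\prod_{1\leq i\not=j\leq n}\Gamma(\imath(\l_i-\l_j)+g)\,\mathcal{F}_{\bl_n}(\bx_n). \eeqq
Substituting this into \rf{Q33} gives
\beqq \Psi^*_{\bl_n}(\bx_n)=\a_n^{-1}\prod_{k\not=j}\Gamma^{-1}(\imath(\l_j-\l_k))\prod_{i\not=j}\Gamma(\imath(\l_i-\l_j)+g)\,\mathcal{F}_{\bl_n}(\bx_n). \eeqq
Since $\a_n^{-1}=\prod_{k=1}^n\Gamma(g)/\Gamma(kg)$, this has exactly the form \rf{02}, except that a residual ratio of $\l$-dependent Gamma factors remains.

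The main obstacle is therefore to show that these $\l$-dependent factors cancel. This must come from the spectral prefactor actually produced by iterating Proposition \ref{proposition1a}, not from the form in which \rf{Q33} is written.

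In \rf{Q25} the factors are written as $\Gamma^{-1}(\pm \imath(\mu-\l_j))$. The exchange relation \rf{I4}, however, produces $\Gamma(g\pm \imath(\l-\mu))$. When the $\Lambda^*$ operator is moved past $\Lambda$, the constant should therefore be $\Gamma^{-1}(g\pm\imath(\mu-\l_j))$, i.e. with the shift by $g$.

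To settle this, I would redo the induction \rf{Q26}--\rf{Q27} keeping track of the $g$-shift at each step. Applying $\Lambda^*_k(\l_k)$ to $\Psi_{\bl_{k-1}}$ contributes $\prod_{j<k}\Gamma^{-1}(g\pm \imath(\l_k-\l_j))$. The product over all $k$ from $2$ to $n$ is then $\prod_{i\not=j}\Gamma^{-1}(\imath(\l_i-\l_j)+g)$.

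With this corrected prefactor, the relation becomes
\beqq \Psi^*_{\bl_n}=\prod_{i\not=j}\Gamma^{-1}(\imath(\l_i-\l_j)+g)\,\Psi_{\bl_n}=\a_n^{-1}\mathcal{F}_{\bl_n}. \eeqq
This is exactly \rf{02}, consistent with the claim \rf{02} stated in the introduction.

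For the bookkeeping I would use the explicit constants $d_n$ and $d^*_n$, and the base case $\Lambda^*_2(\mu)e^{2\pi\imath\l_1 x}$. The base case can be checked directly via the one-dimensional Rains/beta integral; it should give $\Gamma^{-1}(g\pm\imath(\mu-\l_1))$ times $\Psi_{\mu,\l_1}$. I would also confirm $\a_n$ at $n=2$ as a consistency check.
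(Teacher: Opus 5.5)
Your proposal is correct and follows the same route as the paper, which obtains the corollary by combining Theorem \ref{theorem1} with the normalization \rf{00} from \cite{BCDK}. You are also right that the prefactor printed in \rf{Q25}, \rf{Q27} and \rf{Q33} has lost the shift by $g$ coming from \rf{I4}: once $\Gamma^{-1}(g\pm\imath(\mu-\l_j))$ is restored, the induction produces $\prod_{i\neq j}\Gamma^{-1}(\imath(\l_i-\l_j)+g)$, which cancels exactly against the corresponding factor in \rf{00} and leaves the constant $\prod_{k=1}^n\Gamma(g)/\Gamma(kg)$.
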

		\medskip

{\bf 2.} Let us turn to Baxter operators $Q^*_{n,\pm}(\mu)$.
	\begin{proposition}\label{proposition2}For $\Re g>0$ under the condition $|\Im (\l-\mu)|<\frac{1}{2}$ 
		\beq\label{I55} Q_n(\l)Q^*_{n,\pm}(\mu)=Q^*_{n,\pm}(\mu)Q_n(\la) \eeq
		on the level of kernels corresponding to compositions of operators.
		\end{proposition}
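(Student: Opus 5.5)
Following the proof of Proposition \ref{proposition1}, we compare kernels of the two compositions and reduce their equality to a rational degeneration of the Rains identity \rf{Q5}, now with $m=n$ but with a different choice of parameters. Fix $\bx_n$ and the output variables $\bz_n$ of the composition, and treat the $Q^*_{n,+}$ case; the case $Q^*_{n,-}$ follows by the reflection $\bx\to-\bx$ or by the symmetry \rf{pair3}. The kernel of $Q_n(\l)Q^*_{n,+}(\mu)$ is an integral over an intermediate $\by_n\in\R^n$ of
\beqq e^{\const\l(\bbx_n-\bby_n)}K(\bx_n,\by_n)\mu(\by_n)\,\prod_{i<j}|2\sh\pi(y_i-y_j)|^{1-2g}e^{\const\mu(\bby_n-\bbz_n)}\tilde\Delta_n(\bz_n)\prod_{i,k}|2\sh\pi(y_i-z_k)|^{g-1}, \eeqq
with $\bz_n$ interlacing $\by_n$ from above. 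The factor $\mu(\by_n)|\sh|^{1-2g}$ collapses to $\tilde\Delta_n(\by_n)$. After symmetrization, the integration becomes $z_1<y_2<z_2<\dots<y_n<z_n$ with $y_1\in(-\infty,z_1)$. This is exactly the Rains integral with the $b$-side points given by $\bz_n$ together with $z_0=-Z\to-\infty$, and with the $\ch$-factors coming from the points $\bx_n$. The other composition $Q^*_{n,+}(\mu)Q_n(\l)$ produces an integral over $\by$ interlacing $\bx$ from above, $x_1<y_1<x_2<\dots<x_n<y_n<\infty$, with $\ch$-factors coming from $\bz_n$.

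The plan is therefore to take \rf{Q5} with $n$ replaced by $n$ on both sides and $\alpha$'s and $\beta$'s equal to $g$ except at the auxiliary endpoints. On the $x$-side we add an endpoint $x_{n+1}=Y\to+\infty$ with exponent $\alpha=1-g+\rho$-type parameter. On the $z$-side we add $z_0=-Z\to-\infty$ with the analogous parameter. The balancing condition $\sum\alpha=\sum\beta$ then fixes $\rho=i(\l-\mu)$ up to a shift. Extracting leading asymptotics as in \rf{Q8}--\rf{Q9}, the powers $|2\ch\pi(\cdot)|^{\alpha}$ at the far endpoints become the plane waves $e^{2\pi\rho\,\bby}$, while the exponents of order $Z$ and $Y$ match on the two sides. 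The positivity constraints $\Re\alpha,\Re\beta>0$ on the endpoint parameters translate into the condition $|\Im(\l-\mu)|<\tfrac12$. The gamma prefactors must cancel; since both sides have a single $\Gamma$ of an endpoint parameter, their ratio is a constant. No $\ccK$-type factor should appear, consistent with the commutativity in \rf{I55}. Both sides of \rf{I55} carry the same normalization $d_nd_n^*$, which fixes the overall constant.

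The main obstacle is the correct bookkeeping of the endpoint parameters. The endpoint exponents $\alpha_j-1$ must reproduce the $|\sh|^{g-1}$ factors, and the far endpoints must simultaneously generate both exponentials $e^{2\pi i\l(\cdot)}$ and $e^{2\pi i\mu(\cdot)}$ with the total-momentum terms $\bbx_n,\bbz_n$ matching. I would first try a two-parameter limit, sending $x_{n+1}\to+\infty$ and $z_0\to-\infty$ with weights $g\pm\rho$ as in \rf{Q6}, and check that the leftover $|\ch|^{-g}$ factors form precisely $K(\bx_n,\by_n)$ on one side and $K(\by_n,\bz_n)$ on the other. If the exponents do not balance, I would instead use an asymmetric choice, placing both exceptional weights on the $\beta$-side. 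The limit $Z,Y\to\infty$ must then be justified by dominated convergence, using absolute convergence of the limiting integrals as in the proof of Proposition \ref{proposition1}.
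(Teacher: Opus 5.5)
Your framework matches the paper's: the Rains identity \rf{Q5} with $m=n$, genuine points $\bx_n$, $\bz_n$ of weight $g$, one auxiliary endpoint on each side sent to $+\infty$ resp.\ $-\infty$, and comparison of leading asymptotics. Your description of the two composite kernels, and the reduction of $Q^*_{n,-}$ to $Q^*_{n,+}$ by reflection, are fine.

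The gap is the step that actually constitutes the proof, namely the choice of the auxiliary weights. You leave it open, and both concrete options you offer fail. With all other weights equal to $g$, the condition $\sum\alpha_i=\sum\beta_j$ in \rf{I5} forces the two auxiliary weights to coincide, $\alpha_0=\beta_0=:\rho$; it does not determine $\rho$. Hence weights $g\pm\rho$ on opposite sides violate \rf{I5} unless $\rho=0$. Putting both exceptional weights on the $\beta$-side leaves only $n-1$ genuine $z$-points and no auxiliary $x$-point. That is the raising-operator configuration of Proposition \ref{proposition1}, not a composition of two operators in $n$ variables.

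The missing computation is as follows. Take $x_0=X_0\to+\infty$ and $z_0=-X_0$. In the $x$-side integral, the $\by_n$-dependence of the auxiliary factors is $|2\operatorname{sh}\pi(X_0-y_i)|^{\rho-1}\,|2\operatorname{ch}\pi(X_0+y_i)|^{-\rho}\sim e^{-\pi X_0}e^{\pi(1-2\rho)y_i}$. So the plane wave comes jointly from the $\operatorname{sh}$-factor of the same-side endpoint and the $\operatorname{ch}$-factor of the opposite one, not from the $\operatorname{ch}$-factors alone. Symmetrically, one gets $e^{\pi(2\rho-1)y_i}$ on the $z$-side. Matching with \rf{Q4} forces $\rho=\frac12+i(\mu-\lambda)$, independent of $g$.

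The prefactors then give $e^{\pi\sum_j((\rho-1)x_j+\rho z_j)}$ and $e^{-\pi\sum_j(\rho x_j+(\rho-1)z_j)}$. After removing the common factor $e^{\pi\rho(\bbz_n-\bbx_n)}$, these supply exactly the phases $e^{2\pi i\gamma(\bbx_n-\bby_n)}$ and $e^{2\pi i\gamma(\bby_n-\bbz_n)}$ of \rf{Q4}, with $\gamma=\mu-\lambda$. The divergent factor $e^{-2\pi X_0(n(\rho+g)+\rho)}$ and the constant $\Gamma(\rho)\Gamma(g)^n$ are common to both sides.

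Only with this choice does $\Re\rho>0$ become $\Im(\lambda-\mu)>-\frac12$. This is also precisely the condition for the limiting integrals to converge at infinity: the integrands behave like $e^{-2\pi\rho y_n}$ as $y_n\to+\infty$, resp.\ $e^{2\pi\rho y_1}$ as $y_1\to-\infty$. The opposite inequality comes from the $Q^*_{n,-}$ case. Your claim that positivity of the endpoint weights yields $|\Im(\lambda-\mu)|<\frac12$ does not follow from your own setup: weights of the form $g\pm\rho$ would give $g$-dependent constraints, as in Proposition \ref{proposition1}.
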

	\begin{proof}
		Consider Rains identity \rf{Q5} for $m=n$ and reenumerate  the variables $x_i$ and $z_i$ so that the integrations regions become
		\beqq x_1<y_1<\ldots <x_n<y_n<x_0\qquad \text{and}\qquad z_0<y_1<z_1\ldots <y_n<z_n.\eeqq
		 Choose
		\beq\begin{split}\label{Q7} \alpha_i&=\beta_i=g,\qquad i=1,\ldots, n, \\  \alpha_0&=\beta_0=\rho,\qquad \Re\rho>0,\\ 
			x_0&=X_0,\qquad z_0=Y_0=-X_0,\end{split}\eeq
		and tend $X_0\to+\infty$. Then the asymptotics of the LHS of \rf{Q5} are
		\beq \label{Q1}\begin{split}\frac{(2\pi)^n\prod\limits_{1\leq i<j\leq n}|2\sh\pi(x_i-x_j)|^{1-2g}}{\Gamma(\rho)\Gamma(g)^n\prod\limits_{i,j=1}^{ n}|2\ch\pi(x_i-z_j)|^g}
			\times e^{-2\pi n X_0(\rho+g)-2\pi X_0\rho}\times e^{\sum_{j=1}^n\pi((\rho-1)x_j+\rho z_j)}\times \\ \int\limits_{ x_1<y_1<\ldots <x_n<y_n<X_0}
			\hspace{-0.8cm} d \by_n e^{\pi(-2\rho+1)\sum\limits_{j=1}^n y_j}
			\tilde{\Delta}_n(y)\, \prod_{i,j=1}^n\, 
			\frac{|2\sh\pi(x_j-y_i)|^{g-1}}
			{|2\ch\pi(z_j-y_i)|^{g}}. 
		\end{split}\eeq
		Analogously, the asymptotics of the RHS are
		\beq \label{Q2}\begin{split}\frac{(2\pi)^n\prod\limits_{1\leq i<j\leq n}|2\sh\pi(z_i-z_j)|^{1-2g}}{\Gamma(\rho)\Gamma(g)^n\prod\limits_{i,j=1}^{ n}|2\ch\pi(x_i-z_j)|^g}
			\times e^{-2\pi n X_0(\rho+g)-2\pi X_0\rho}\times e^{\sum_{j=1}^n\pi(-(\rho-1)z_j-\rho x_j)}\times \\ \int\limits_{ -X_0<y_1<z_1<\ldots <y_n<z_n}
			\hspace{-0.8cm} d \by_n e^{\pi(2\rho-1)\sum\limits_{j=1}^n y_j}
			\tilde{\Delta}_n(y)\, \prod_{i,j=1}^n\, 
			\frac{|2\sh\pi(z_j-y_i)|^{g-1}}
			{|2\ch\pi(x_j-y_i)|^{g}}. 
		\end{split}\eeq
		Both limiting integrals
		\beqq  \int\limits_{ x_1<y_1<\ldots <x_n<y_n<\infty}
		\hspace{-0.8cm} d \by_n e^{\pi(-2\rho+1)\sum\limits_{j=1}^n y_j}
		\tilde{\Delta}_n(y)\, \prod_{i,j=1}^n\, 
		\frac{|2\sh\pi(x_j-y_i)|^{g-1}}
		{|2\ch\pi(z_j-y_i)|^{g}} \eeqq
		and
		\beqq
		\int\limits_{ -\infty<y_1<z_1<\ldots <y_n<z_n}
		\hspace{-0.8cm} d \by_n e^{\pi(2\rho-1)\sum\limits_{j=1}^n y_j}
		\tilde{\Delta}_n(y)\, \prod_{i,j=1}^n\, 
		\frac{|2\sh\pi(z_j-y_i)|^{g-1}}
		{|2\ch\pi(x_j-y_i)|^{g}} \eeqq
		 converge when $\Re \rho>0$.
		Denote
		\beqq \gamma=\mu-\lambda,\qquad 2\rho-1=2i\gamma,\qquad \rho=\frac{1}{2}+{i\gamma}.\eeqq
		Then  \eqref{Q1} and \eqref{Q2} imply the equality
		\beq\label{I5b} Q_n(\l)Q^*_{n,+}(\mu)=Q^*_{n,+}(\mu)Q_n(\la) \eeq
		  in a form
		\beq\label{Q4}\begin{split}
		&{\prod\limits_{1\leq i<j\leq n}|2\sh\pi(z_i-z_j)|^{1-2g}}\hspace{-1.2cm}
			\int\limits_{ -\infty<y_1<z_1<\ldots <y_n<z_n}
			\hspace{-0.9cm} d \by_n e^{2\pi i\gamma\sum\limits_{j=1}^n(y_j- z_j)}\!\!
			\tilde{\Delta}_n(y)
			\prod_{i,j=1}^n 
			\frac{|2\sh\pi(z_j-y_i)|^{g-1}}
			{|2\ch\pi(x_j-y_i)|^{g}}=\\
			&{\prod\limits_{1\leq i<j\leq n}|2\sh\pi(x_i-x_j)|^{1-2g}}\hspace{-1.2cm}
			\int\limits_{ x_1<y_1<\ldots <x_n<y_n<\infty}
			\hspace{-0.9cm} d \by_n e^{2\pi i\gamma\sum\limits_{j=1}^n(x_j- y_j)}\!\!
			\tilde{\Delta}_n(y)\, \prod_{i,j=1}^n 
			\frac{|2\sh\pi(x_j-y_i)|^{g-1}}
			{|2\ch\pi(z_j-y_i)|^{g}}. \end{split}
		\eeq
		 On the other hand, the equality \rf{Q4} could be also read as the condition on kernels for the relation
		 \beq\label{I5c} Q_n(\mu)Q_{n,-}^*(\l)=Q_{n,-}^*(\l)Q_n(\mu) \eeq
		 which implies the second statement of the Proposition.
	\end{proof}

	\begin{proposition}\label{proposition03} Under the condition
		\beq \label{Q22}\Re( i(\l-\mu)+(1-g)/2)>0 \eeq we have the exchange relation
		\beq \label{Q23}Q_{n,+}^*(\l)\Lambda_n(\mu)=
		\frac{\Gamma\left(\frac{1-g}{2}+i(\l-\mu)\right)}{\Gamma\left(\frac{1+g}{2}+i(\l-\mu)\right)}\Lambda_n(\mu) Q_{n-1,+}^*(\l).
		\eeq
	\end{proposition}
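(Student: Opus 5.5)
The plan is to follow the pattern of Propositions \ref{proposition1} and \ref{proposition2}. We write both sides of \rf{Q23} as integral kernels and identify them with a suitable degeneration of the Rains identity \rf{Q5}. In the kernel of $Q^*_{n,+}(\l)\Lambda_n(\mu)$ the intermediate variables $\by_n$ satisfy
\[ x_1<y_1<x_2<\ldots<x_n<y_n<\infty, \]
and they are coupled to the final variables $\bz_{n-1}$ through $\prod_{i,j}(2\ch\pi(y_i-z_j))^{-g}$. In the kernel of $\Lambda_n(\mu)Q^*_{n-1,+}(\l)$ the intermediate variables $\by_{n-1}$ run over $\R^{n-1}$ and are coupled to $\bx_n$ through the $\ch$-kernel. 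Interlacing with the outgoing variables $\bz_{n-1}$ gives
\[ z_1<y_1<z_2<\ldots<z_{n-1}<y_{n-1}<\infty, \]
where the factor $\rmu(\by_{n-1})$ of $\Lambda_n$ combines with $|2\sh\pi(y_i-y_j)|^{1-2g}$ from $Q^*_{n-1,+}$ to give $\tilde\Delta_{n-1}(y)$. So we need an identity with $n$ integration variables on one side and $n-1$ on the other. This suggests taking $m=n-1$ in \rf{Q5}, with the extra endpoints sent to infinity.

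Concretely, I would relabel so that the left integration region is $x_1<y_1<\ldots<x_n<y_n<x_0$ and the right one is $z_0<y_1<z_1<\ldots<z_{n-1}<y_{n-1}<z_{n-1}'$; the exact placement of endpoints is to be matched to the region above. I would set $\a_i=g$ for $i=1,\ldots,n$, $\b_j=g$ for the $n-1$ finite $z_j$, and put the remaining weights on the points sent to infinity, $x_0=X_0\to+\infty$ and $z_0\to-\infty$ (and possibly $z_n\to+\infty$). These weights are fixed by the balancing condition $\sum\a=\sum\b$, which leaves one weight $\rho$ free and forces another to equal $\rho+g$. As in the proof of Proposition \ref{proposition2}, after extracting the leading exponentials $e^{-c X_0}$, the $X_0$-dependence of $|2\sh\pi(X_0-y_i)|^{\rho-1}$ and $(2\ch\pi(z_0-y_i))^{-\b_0}$ turns into the plane-wave factors $e^{\pm\pi(2\rho-1)\sum y_i}$. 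I would then set $\rho=\frac{1-g}{2}+i(\l-\mu)$, or whatever shift makes these factors equal to $e^{2\pi i(\l-\mu)(\cdots)}$. The prefactors $1/\Gamma(\a_0)$ and $1/\Gamma(\b_0)$ in \rf{Q5} then produce exactly the ratio $\Gamma(\frac{1-g}{2}+i(\l-\mu))/\Gamma(\frac{1+g}{2}+i(\l-\mu))$. The positivity $\Re\a_0>0$ required by \rf{I5} is the hypothesis \rf{Q22}. The remaining powers of $\Gamma(g)$ and $2\pi$ should cancel against $d^*_n,d^*_{n-1},d_{n-1}$.

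The steps, in order, are:
\begin{enumerate}
\item Write both composed kernels explicitly and strip the common factors, namely the plane waves in $\mu$ and the $\sh$-prefactors.
\item Choose the specialization of \rf{Q5} and compute the $X_0\to\infty$ asymptotics of both sides, as in \rf{Q1} and \rf{Q2}.
\item Check that the limiting integrals converge absolutely for the stated range of $\rho$, so that passing to the limit is justified.
\item Compare the results with the kernels and fix the constants.
\end{enumerate}

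The main obstacle is the bookkeeping in the second step: choosing which Rains parameters go to infinity, and with which weights, so that the surviving exponential factors reproduce exactly the factors $e^{2\pi i\l(\bbx-\bby)}$ and $e^{2\pi i\mu(\cdots)}$ with the correct signs. This matters because the $\ch$-kernel of $\Lambda_n$ sits on the opposite side from the $\sh$-kernel of $Q^*$ in the two compositions. I would first match the leading exponents in $X_0$ and the coefficients of $\sum y_i$, $\sum x_i$, $\sum z_j$ on both sides as linear equations in the unknown weights, and solve these before doing anything else. Once this is settled, the identification is purely formal, as in Proposition \ref{proposition2}.
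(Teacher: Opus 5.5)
Your route is correct, but it differs from the paper's. The paper does not return to \rf{Q5} here. It starts from the already degenerated identity \rf{Q4}, the kernel form of Proposition~\ref{proposition2}, multiplies it by $e^{\pi n g z_n}$ and lets $z_n\to+\infty$. On one side this just removes the coupling to $z_n$. On the other side the shift $y_n=z_n+t$ splits off the one-dimensional integral $\int_{t<0}e^{(2i\g-g)\pi t}|2\sinh\pi t|^{g-1}dt=\frac{\Gamma(g)\Gamma(i\g-g+\frac12)}{2\pi\Gamma(i\g+\frac12)}$. After $i\g=i(\l-\mu)+\frac g2$, this Beta integral is the source of the Gamma ratio in \rf{Q23}.

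Your single specialization of \rf{Q5} with $m=n-1$ does the same job in one step. Take $\a_1=\dots=\a_n=\b_1=\dots=\b_{n-1}=g$, $\a_0=\rho$, $\b_0=\rho+g$, $x_0=X_0$, $z_0=-X_0$.
\begin{itemize}
\item The endpoint factors give $e^{-\pi(2\rho+g-1)\bby}$ on the $n$-fold side and $e^{+\pi(2\rho+g-1)\bby}$ on the $(n-1)$-fold side. This is not $e^{\pm\pi(2\rho-1)\bby}$ as you wrote, because here $\b_0\ne\a_0$.
\item Multiply both sides by $e^{\pi\b_0\bbx-\pi\a_0\bbz}$. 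The remaining exponentials become exactly $e^{2\pi i(\l-\mu)(\bbx-\bby)}$ and $e^{2\pi i(\l-\mu)(\bby-\bbz)}$ for $\rho=\frac{1-g}2+i(\l-\mu)$.
\item The prefactors leave $\frac{2\pi}{\Gamma(g)}\int_L=\frac{\Gamma(\a_0)}{\Gamma(\b_0)}\int_R$, and $2\pi/\Gamma(g)=d^*_n/d^*_{n-1}$.
\item The limiting $n$-fold integrand decays like $e^{-2\pi\rho y_n}$. So Rains' positivity condition and the convergence both reduce to $\Re\rho>0$, which is \rf{Q22}.
\end{itemize}
Your route buys one limit instead of two, no decoupling step, and a transparent origin for both the constant and the hypothesis. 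The paper's route stays inside the chain of kernel identities it has already established, at the price of a second limit interchange.

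One correction to your set-up. In the kernel of $\Lambda_n(\mu)Q^*_{n-1,+}(\l)$ the intermediate variables run over $-\infty<y_1<z_1<\dots<y_{n-1}<z_{n-1}$, not over $z_1<y_1<\dots<y_{n-1}<\infty$. In $Q^*_{+}$ the integration variable lies above the corresponding external one. The region you wrote belongs to $\Lambda_n(\mu)Q^*_{n-1,-}(\l)$. The correct region is exactly what your choice $z_0\to-\infty$ produces, so no further endpoint ($z_n$ or $z'_{n-1}$) should be sent to infinity. Sending the extra right-hand point to $+\infty$ together with $x_0$ would leave the $\rho$-independent factor $e^{\pi(1+g)\bby}$ in the $n$-fold integrand, and no plane wave.
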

	 Under the same conditions we have similar exchange relation for another Baxter operator
	 \beq \label{Q23.5}Q_{n,-}^*(\l)\Lambda_n(\mu)=
	 		\frac{\Gamma\left(\frac{1-g}{2}-i(\l-\mu)\right)}{\Gamma\left(\frac{1+g}{2}-i(\l-\mu)\right)}\Lambda_n(\mu) Q_{n-1,-}^*(\l).
	 		\eeq	 
	\begin{proof} Multiply the equality \rf{Q4} by $e^{\pi n gz_n}$ and tend $z_n\to+\infty$.
		The integral in the RHS  of \rf{Q4} then have a pointwise limit equal to 
		\beqq {\prod\limits_{1\leq i<j\leq n}\!\!\!|2\sh\pi(x_i-x_j)|^{1-2g}}\hspace{-1.2cm}
		\int\limits_{ x_1<y_1<\ldots <x_n<y_n<\infty}
		\hspace{-0.9cm} d \by_n e^{\pi \sum\limits_{j=1}^n(2i\gamma x_j- (2i\gamma-g) y_j)}\!\!
		\tilde{\Delta}_n(y)\, \prod_{i=1}^n 
		\frac{\prod_{j=1}^{n}|2\sh\pi(x_j-y_i)|^{g-1}}
		{\prod_{j=1}^{n-1}|2\ch\pi(z_j-y_i)|^{g}}.\eeqq
		As $y_n\to+\infty$, the integrand has asymptotics equal to
		\beqq e^{\pi(2g-1-2i\gamma) y_n},\eeqq
		and thus converges once
		\beq\label{Q20} \Re (2i\gamma+1-2g) >0. \eeq
		Thus under this condition we have the corresponding limit of integrals.
		
		In the LHS of \rf{Q4} we first make change of variables
		\beqq y_n=z_n+t \eeqq
		and then take a limit. Again pointwisely, we get
		 	the limit 
		 \beq\label{Q21}\begin{split} {\prod\limits_{1\leq i<j\leq n-1}\!\!\!|2\sh\pi(z_i-z_j)|^{1-2g}}\hspace{-1.2cm}
		 \int\limits_{\substack{y_1<z_1<\ldots <y_{n-1}<z_{n-1}\\t<0}}\hspace{-1cm}d\by_{n-1}
		 e^{2\pi i\gamma\sum\limits_{j=1}^{n-1}(y_j- z_j)+\pi g(\sum\limits_{i=1}^nx_i+\sum\limits_{j=1}^{n-1}(z_j-y_j))}\!\!
		 \tilde{\Delta}_{n-1}(y)\times\\
		 \prod_{i=1}^{n-1}	\frac{\prod_{j=1}^{n -1}|2\sh\pi(z_j-y_i)|^{g-1}}
		 {\prod_{j=1}^{n}|2\ch\pi(x_j-y_i)|^{g}}\times e^{(2i\g-g)\pi t}|2\sh\pi t|^{g-1}dt.
		\end{split}\eeq
		The asymptotics of the integrand with respect to the cases $y_1\to-\infty$ and $t\to-\infty$ are
		\beqq e^{\pi(2i\g+1)y_1} \qquad \text{and} \qquad e^{\pi(2i\g+1-2g)t}, \qquad \text{correspondingly.}\eeqq
		Thus, under the condition \rf{Q20} it converges, and we can perform the limit of the identity of the integrals.
		The limiting integral \rf{Q21} decouples into the product of $n-1$ dimensional integral 
		\beqq \int\limits_{y_1<z_1<\ldots <y_{n-1}<z_{n-1}}\hspace{-1cm}d\by_{n-1}
		e^{2\pi i\gamma\sum\limits_{j=1}^{n-1}(y_j- z_j)+\pi g(\sum\limits_{i=1}^nx_i+\sum\limits_{j=1}^{n-1}(z_j-y_j))}\!\!
		\tilde{\Delta}_{n-1}(y)\times
		\prod_{i=1}^{n-1}	\frac{\prod_{j=1}^{n-1}|2\sh\pi(z_j-y_i)|^{g-1}}
		{\prod_{j=1}^{n}|2\ch\pi(x_j-y_i)|^{g}}\eeqq
		and one-dimensional integral
		\beqq \int\limits_{t<0} e^{(2i\g-g)\pi t}|2\sh\pi t|^{g-1}\,  dt=\frac{\Gamma(g)\Gamma(i\gamma-g+\frac{1}{2})}{2\pi \Gamma(i\gamma+\frac{1}{2})}.\eeqq 
		They are convergent under conditions \rf{Q22}.
		After change of parameters
		\beqq i\gamma= i(\l-\mu)+\frac{g}{2} \eeqq
		we finally arrive to the relation \rf{Q23} under restriction \rf{Q22}, taking into account
		 normalizing constants \rf{I3} and \rf{dconst}.
			\end{proof}
			
			\begin{theorem}\label{theorem2} For $ {0<\Re g< 1}$ 
				\beq \label{Q35}Q^*_{n,\pm}(\mu) \Psi_{\bl_n}(
					\bx_n)=
					\prod_{j=1}^n\frac{\Gamma\left(\frac{1-g}{2}\mp i(\l_j-\mu)\right)}{\Gamma\left(\frac{1+g}{2}\mp i(\l_j-\mu)\right)}\Psi_{\bl_n(
						\bx_n)}.
					\eeq
			\end{theorem}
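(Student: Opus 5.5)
We argue by induction on $n$, using the exchange relation of Proposition \ref{proposition03} together with the representation \rf{r16} of the wave function through raising operators. For $n=1$ the operator $Q^*_{1,+}(\mu)$ acts on $e^{\const\l_1x}$ by a one-dimensional integral. After the substitution $y=x+t$, $t>0$, we get
\beqq
d_1^*\int_0^\infty e^{2\pi i(\l_1-\mu)t}\,|2\sh\pi t|^{g-1}\,dt\; e^{\const\l_1 x},
\eeqq
which is a Beta-type integral. Its value is exactly $\Gamma\bigl(\frac{1-g}{2}-i(\l_1-\mu)\bigr)/\Gamma\bigl(\frac{1+g}{2}-i(\l_1-\mu)\bigr)$, the same computation as the one-dimensional integral in the proof of Proposition \ref{proposition03}. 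The case $Q^*_{1,-}$ is the mirror image, with $t<0$.

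For the induction step we write $\Psi_{\bl_n}=\Lambda_n(\l_n)\Psi_{\bl_{n-1}}$ and apply \rf{Q23} (resp. \rf{Q23.5}) with $\mu\to\l_n$:
\beqq
Q^*_{n,\pm}(\mu)\Lambda_n(\l_n)\Psi_{\bl_{n-1}}=\frac{\Gamma\left(\frac{1-g}{2}\mp i(\l_n-\mu)\right)}{\Gamma\left(\frac{1+g}{2}\mp i(\l_n-\mu)\right)}\Lambda_n(\l_n)\,Q^*_{n-1,\pm}(\mu)\Psi_{\bl_{n-1}}.
\eeqq
The sign convention in \rf{Q23}, written with $\Gamma(\cdot+i(\l-\mu))$ where $\l$ is the Baxter parameter, matches the stated formula after relabelling. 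We then invoke the induction hypothesis for $Q^*_{n-1,\pm}(\mu)\Psi_{\bl_{n-1}}$ and pull out the scalar factor. The exchange relation requires \rf{Q22}, i.e. $\Re(\pm i(\mu-\l_j))+\frac{1-g}{2}>0$. For real $\l_j,\mu$ this reduces to $\Re g<1$, which is the hypothesis of the theorem. Thus the whole range of real spectral parameters is covered directly, with no analytic continuation needed.

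The main obstacle is justifying the use of the kernel identity \rf{Q23} in the presence of $\Psi_{\bl_{n-1}}$, i.e. applying Fubini to the compositions $Q^*_{n,\pm}(\mu)\Lambda_n(\l_n)\Psi_{\bl_{n-1}}$ and $\Lambda_n(\l_n)Q^*_{n-1,\pm}(\mu)\Psi_{\bl_{n-1}}$. We would argue as in the proof of Proposition \ref{proposition1a}, in three steps.
\begin{enumerate}
\item The finite singularities $|2\sh\pi(x_i-y_k)|^{g-1}$ on the interlacing strata are integrable for $\Re g>0$ by Lemma \ref{le_lst}, in the form \rf{Q28b}.
\item These singular factors are then replaced by exponential majorants.
\item For the behaviour at infinity we use the bound \rf{I1a} on $\Psi_{\bl_{n-1}}$, as in Proposition \ref{propositionQ}. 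The one-sided unbounded variable $y_n$ (resp. $y_1$) produces an exponent $\pi(g-1)y_n$ after combining with the decay of the wave function, and this is integrable precisely when $\Re g<1$. The remaining variables are controlled via the identity \rf{Q34} and the bound on $S_n$ from \cite{BDKK2}.
\end{enumerate}
Absolute convergence of both iterated integrals then lets us exchange the order of integration and insert the kernel identity, which completes the induction.
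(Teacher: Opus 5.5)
Your outline is the paper's: induction on the exchange relation of Proposition~\ref{proposition03}, with Fubini justified as in Proposition~\ref{proposition1a}. The base case and the sign bookkeeping are correct.

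You depart from the paper in one place, and there the proposal is not justified as written. You claim absolute convergence directly on all of $0<\Re g<1$. Your step 3 is really the argument for $Q^*_{n,\pm}(\mu)\Lambda_n(\l_n)\Psi_{\bl_{n-1}}$, where only $y_n$ (resp.\ $y_1$) is unbounded and Proposition~\ref{propositionQ} applies.

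The other composite, $\Lambda_n(\l_n)Q^*_{n-1,\pm}(\mu)\Psi_{\bl_{n-1}}$, is different. There all of $\by_{n-1}$ runs over $\R^{n-1}$, and the $Q^*$-variables interlace as $y_1<z_1<\dots<y_{n-1}<z_{n-1}$, so \rf{Q34} is not the relevant identity. The paper treats this composite with \rf{Q39} and an $\varepsilon$-bound on $T_n$ from \cite{BDKK2}. That leaves the uncompensated term $(2g-1)|z_{n-1}-y_{n-1}|$ in \rf{Q43}. Hence the paper proves convergence only for $0<\Re g<1/2$ and reaches $0<\Re g<1$ by analyticity in $g$. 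As it stands, your step 3 does not deliver $\tfrac12\le\Re g<1$. You must either add that continuation step or sharpen the estimate.

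The sharper estimate exists and in fact vindicates your claim. Take $g$ real; otherwise replace $g$ by $\Re g$ in the exponents. Combine the kernels of $\Lambda_n(\l_n)$ and $Q^*_{n-1,+}(\mu)$ with the decay \rf{I1a} of $\Psi_{\bl_{n-1}}(\bz_{n-1})$; polynomial corrections for the absolute integrand are harmless. The exponent of the absolute integrand is then
\[
\pi\Bigl[-g\sum_{i,j}|x_i-y_j|+g\sum_{i<j}|y_i-y_j|+(1-g)\Bigl(\sum_{i<j}|y_i-y_j|+\sum_{i<j}|z_i-z_j|-\sum_{i,k}|y_i-z_k|\Bigr)\Bigr].
\]
On $y_1<z_1<\dots<y_{n-1}<z_{n-1}$ the inner bracket equals $-\sum_j(z_j-y_j)$; this is the content of \rf{Q39}. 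By the triangle inequality, $-\sum_{i,j}|x_i-y_j|+\sum_{i<j}|y_i-y_j|\le C(\bx_n)-2\sum_j|y_j|$. So the integrand is bounded by $C\exp\bigl(-2\pi g\sum_j|y_j|-\pi(1-g)\sum_j(z_j-y_j)\bigr)$, which is integrable for $0<g<1$. The case $Q^*_{n-1,-}$ is symmetric.

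With this computation written out, your proof covers the full range without analytic continuation and is slightly sharper than the paper's. The paper loses ground because it bounds $gT_n$, which still contains $-g\sum_j(z_j-y_j)$, only by $-\varepsilon\sum|z|$.
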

			\begin{proof}
			Similar  to the proof of Theorem \ref{theorem1} and Proposition \ref{proposition1a}, the statement follows by inductive use of Proposition \ref{proposition03} and  requires a check of convergences of multiple integrals entering into the definition of both expressions
					\beq\label{Q36} \left(Q^*_{n,+}(\mu)\Lambda_{n}(\l_{n-1})\Psi_{\bl_{n-1}}\right)(\bx_n) \eeq
					  and
					  \beq \label{Q37} \left(\Lambda_n(\l_{n-1})Q_{n-1,+}^*(\mu)\Psi_{\bl_{n-1}}\right)(\bx_n) .\eeq
			First, we repeat the arguments given in the proof of 	Proposition \ref{proposition1a}	 to show that  all finite singularities are integrable, and we can
			 just pay attention to convergence at infinity. 
			In the most nontrivial case of \rf{Q37},
			we bound the integrand by
			   the exponent of
			  		    \beq \begin{split} \label{Q41} \pi(-g\sum_{i=1}^n\sum_{j=1}^{n-1}|x_i-y_j| +\sum_{i<j}|y_i-y_j|+(g-1)\sum_{j=1}^{n-1}\sum_{k=1}^{n-1}|y_j-z^{(n-1)}_k|+ \\
			  		    + \sum_{1\leq i<j\leq n-1}|z^{(n-1)}_i-z^{(n-1)}_j|+
			  		     g S_{n-1}(\bz_{n-1},\ldots \bz_1))\end{split}\eeq
				are integration variables with the only restriction
						\beq\label{Q38} y_1<z_1^{(n-1)}< y_2<\ldots< z_{n-2}^{(n-1)}<y_{n-1}<z_{n-1}^{(n-1)},\eeq
						$ x_1,\ldots,x_n$  are external parameters 
						and the function $S_m (\bz_m,\bz_{n-1}.\ldots \bz_1)$ is given by the relation \rf{Q40}
					
			   The inequality \rf{Q38} implies the relation (compare \cite[Lemma3.5]{R}) 
			    \beq\label{Q39}\begin{split}
			    \sum_{1\leq i<j\leq n-1}|y_i-y_j| - \sum^{n-1}_{i=1}\sum^{n-1}_{j=1}|y_i-z^{(n-1)}_j| +\sum_{1\leq i<j\leq n-2}|z^{(n-1)}_i-z^{(n-2)}_j| +\\ \sum_{j=1}^{n-1}|z_j^{(n-1)}-y_j|=0.
			    \end{split}\eeq 
			     Multiply \rf{Q39} by $2g-1$ and add to \rf{Q41}. Then \rf{Q41} turns to
			\beqq \begin{split}\pi g  T_n(\bx_n,\by_{n-1}, \bz_{n-1},\ldots, \bz_1)+(2g-1)\sum_{j=1}^{n-1}|z_j^{(n-1)}-y_j|
				\end{split}\eeqq
				where
				\beqq \begin{split}T_n(\bx_n,\by_{n-1}, \bz_{n-1},\ldots, \bz_1)= -\sum_{i=1}^n\sum_{j=1}^{n-1}|x_i-y_j|+\sum_{\substack{i,j=1\\ i\not=j}}^{n-1}|y_i-y_j|\\
				-\sum_{i,j=1}^{n-1}|y_i-z_j^{(n-1)}+	\sum_{\substack{i,j=1\\i\not=j}}^{n-1}|z_i^{(n-1)}-z_j^{(n-1)}|+S_{n-1}(\bz_{n-1},\ldots \bz_1))
			\end{split}\eeqq
			It was also shown in \cite{BDKK2} that
			\beqq T_n(\bx_n,\bz_{n-1}, \bz_{n-2},\ldots, \bz_1)<C(\bx_n)- \ve \sum_{i=1}^{n-1}\sum_{j=1}^i |z_j^{(i)}|\eeqq	  
				Thus we can bound the integrand of \rf{Q37} by 
				\beq \label{Q43} C(\bx_n)\exp(-\ve(\sum |y_j| +\sum |z_j^{k}|)+(2g-1)|z_{n-1}^{(n-1)}-y_{n-1}|)\eeq
				This proves the convergence of the integral \rf{Q37} for $0<\Re g<1/2$ and thus the equality \rf{Q35} for $0<\Re g<1/2$. However, both sides of equality make sense for $0<\Re g<1$ and depend analytically on $g$. So we can freely extend the statement to the region $0<\Re g<1$.	  
				\end{proof}
			\begin{corollary} For $0<g<1$ the operators 
			\beq\label{Q44} Q_{n}^*(\mu)=Q_{n,+}^*(\mu)+Q_{n,-}^*(\mu)\eeq
			form a commuting family of symmetric with respect to scalar product \rf{pair1} operators diagonalized by functions
			 $\Psi_{\bl_n}(\bx_n)$ with eigenvalues
			 \beq\label{Q45}
			 \prod_{j=1}^n\frac{\Gamma\left(\frac{1-g}{2}- i(\l_j-\mu)\right)}{\Gamma\left(\frac{1+g}{2}- i(\l_j-\mu)\right)} + \prod_{j=1}^n\frac{\Gamma\left(\frac{1-g}{2}+ i(\l_j-\mu)\right)}{\Gamma\left(\frac{1+g}{2}+ i(\l_j-\mu)\right)}.\eeq
			\end{corollary}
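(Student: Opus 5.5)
The corollary has three assertions: diagonalization, commutativity and symmetry. I would prove them in that order, so that commutativity and symmetry follow from diagonalization and the already established relations.

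\emph{Diagonalization.} Sum the two eigenvalue relations \rf{Q35} of Theorem \ref{theorem2}. For $0<g<1$ and real $\bl_n$, $\mu$, both $Q^*_{n,+}(\mu)\Psi_{\bl_n}$ and $Q^*_{n,-}(\mu)\Psi_{\bl_n}$ are defined by convergent integrals, by Proposition \ref{propositionQ}. Hence $Q^*_n(\mu)\Psi_{\bl_n}$ is the sum of the two multiples of $\Psi_{\bl_n}$, and its eigenvalue is the sum \rf{Q45}. The upper sign in \rf{Q35} gives the factor with $-i(\l_j-\mu)$ and the lower sign gives the factor with $+i(\l_j-\mu)$.

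\emph{Symmetry.} This follows from \rf{pair3}. For real $\mu$,
\[
\langle Q^*_n(\mu)f,h\rangle=\langle Q^*_{n,+}(\mu)f,h\rangle+\langle Q^*_{n,-}(\mu)f,h\rangle=\langle f,Q^*_{n,-}(\mu)h\rangle+\langle f,Q^*_{n,+}(\mu)h\rangle=\langle f,Q^*_n(\mu)h\rangle ,
\]
on the common domain where the integrals converge. This domain contains the wave functions and, more generally, sufficiently fast decaying symmetric functions. As a consistency check, the eigenvalue \rf{Q45} is real for real $\bl_n,\mu$ and real $g$: its two terms are complex conjugates of each other, since $\overline{\Gamma(z)}=\Gamma(\bar z)$.

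\emph{Commutativity.} I would derive it from the spectral description: on each $\Psi_{\bl_n}$ the operators $Q^*_n(\mu)$ and $Q^*_n(\nu)$ act by scalars, so $[Q^*_n(\mu),Q^*_n(\nu)]\Psi_{\bl_n}=0$ for all real $\bl_n$. To pass to general functions I would use completeness of the wave functions $\Psi_{\bl_n}$, $\bl_n\in\R^n$ modulo permutations, in the symmetric $L^2$ space with the measure $\tilde{\Delta}_n^{2g}$. This is the Plancherel theorem for the Heckman--Opdam transform, available through the identification \rf{00}. On the transform side each $Q^*_n(\mu)$ is multiplication by the function \rf{Q45}, and multiplication operators commute.

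I expect the main obstacle to be the functional-analytic justification. One has to check that the decomposition $f=\int \hat f(\bl)\Psi_{\bl_n}\,d\bl$ can be pushed through the integral operators $Q^*_{n,\pm}$. This is a Fubini argument using the bound \rf{I1a} together with the estimates from Proposition \ref{propositionQ}, uniform in $\bl$, for example for $\hat f$ compactly supported and smooth. Such functions form a dense set, and commutativity on it means commutativity in the sense of the corollary.

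If one prefers to avoid Plancherel, an alternative is to prove kernel-level commutativity of $Q^*_{n,\pm}(\mu)$ with $Q^*_{n,\pm}(\nu)$ directly from further degenerations of the Rains identity \rf{Q5}, as in Proposition \ref{proposition2}. I would use the spectral route.
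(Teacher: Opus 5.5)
Your proposal is correct and follows the paper's own argument. You obtain the eigenvalues by adding the two relations of Theorem~\ref{theorem2}, symmetry from \rf{pair3}, and commutativity from the fact that all $Q^*_n(\mu)$ are diagonal on the complete system $\Psi_{\bl_n}$; your appeal to the Heckman--Opdam Plancherel theorem and the Fubini check simply make explicit what the paper's one-line proof compresses into the word ``complete''.

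The only slip is your remark that the wave functions lie in the domain where the pairings \rf{pair3} converge. In fact $|\Psi_{\bl_n}|$ is invariant under $\bx_n\mapsto\bx_n+c(1,\ldots,1)$, so $|\Psi_{\bl_n}|^2\tilde{\Delta}_n(\bx_n)^{2g}$ is not integrable. Symmetry should therefore be asserted only on decaying functions, which is all you actually use.
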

			\begin{proof} Due to \rf{pair3}, the operators $Q_{n}^*(\mu)$ are symmetric with respect to scalar product \rf{pair1} and are diagonal on the complete set of functions  $\Psi_{\bl_n}(\bx_n)$.
			
			\end{proof}

{\bf 4. Example $n=1$}

Compute directly the eigenvalue of $Q^*_{1,+}(\mu)$. We have
\beqq\begin{split} Q^*_{1,+}(\mu)(e^{2\pi i\gamma x})=&\frac{2\pi}{\Gamma(g)}\int\limits_{y>x} e^{2\pi i\mu(x-y)}\big(2\sh\pi(y-x)\big)^{g-1}e^{2\pi i\gamma y}dy
\\	\substack{ = \\  t=y-x,\\ \lambda=\mu-\gamma }\hspace{0.4cm}
	\frac{2\pi 	e^{2\pi i\gamma x}}{\Gamma(g)}&\int\limits_0^\infty e^{\pi(-2i\l+g-1)t}(1-e^{-2\pi t})^{g-1}dt	\\ \substack{ = \\  u=e^{-2\pi t} }\hspace{0.5 cm}&
		\frac{e^{2\pi i\gamma x}}{\Gamma(g)}\int\limits_0^1u^{\frac{1-g+2i\l}{2}-1}(1-u)^{g-1}du=
	\frac{\Gamma\left(\frac{1-g}{2}+i\lambda\right)}{\Gamma\left(\frac{1+g}{2}+i\lambda\right)}e^{2\pi i\gamma x}.
	\end{split}\eeqq
For real $\l=\mu-\gamma$ integral converges for $0<g<1$. 
Similarly, \beqq\begin{split}
 Q^*_{1,-}(\mu)(e^{2i\gamma x})=&\frac{2\pi}{\Gamma(g)}\int\limits_{y<x} e^{2\pi i\mu(x-y)}\big(2\sh\pi(x-y)\big)^{g-1}e^{2\pi i\gamma y}dy
 \\	\substack{ = \\  t=x-y,\\ \lambda=\mu-\gamma }\hspace{0.4cm}
 \frac{2\pi e^{2\pi i\gamma x}}{\Gamma(g)}&\int\limits_0^\infty e^{\pi(2i\l+g-1)t}(1-e^{-2\pi t})^{g-1}dt	\\ \substack{ = \\  u=e^{-2\pi t} }\hspace{0.5 cm}&
 \frac{e^{2\pi i\gamma x}}{\Gamma(g)}\int\limits_0^1u^{\frac{1-g-2i\l}{2}-1}(1-u)^{g-1}du=
 \frac{\Gamma\left(\frac{1-g}{2}-i\lambda\right)}{\Gamma\left(\frac{1+g}{2}-i\lambda\right)}e^{2\pi i\gamma x}.
	\end{split}\eeqq

\section{Limiting procedure}
We may obtain interlacing raising operator $\Lambda^*$ as a strict limit from reflected raising operator $\Lambda^{R*}$, corresponding to hyperbolic Ruijsenaars model. The latter was introduced and studied at \cite{BDKK4}. We suggest a way to obtain the limiting operator  providing required calculations.
 
The subtle  thing is that contrary to general hyperbolic system, the raising $\Lambda^*$ operator in the rational limit has additional integrable singularities along the diagonals,which prevent the direct application of dominated convergence theorem to pointwise limit of the operator's kernel, but requires instead additional bounds of integrals along the singularities.

To begin with, recall that reflected raising operator $\Lambda^{R*}$ was defined as
\beq\begin{split}\label{def_lr}
\Lambda^{R*}_n(\l)f(\bx_n) = d^{R*}_{n-1}(g;\omega_2)\int_{\mathbb{R}^{n-1}}\Lambda^{R*}(\bx_n;\by_{n-1};\l)f(\by_{n-1})d\by_{n-1},\\
\Lambda^{R*}(\bx_n;\by_{n-1};\l) = (\eta^R)^{-1}(\bx_n)e^{2\pi\iota\l(\underline{\bx}_n-\underline{\by}_{n-1})}K^{R*}(\bx_n;\by_{n-1})\Delta^R(\by_{n-1}),
\end{split}\eeq
where $d^R_n(g)$ is the normalization constant, $\Delta^{R}$ -- the Sklyanin measure, $\eta^R$ -- the inverse of its shifted by $g$ modification, and $K^{R*}$ is the kernel function with reflected coupling constant. To be specific,
\beq\begin{split}
d^{R*}_n(g;\omega_2) = \frac{S_2(\omega_2g)^{n}}{\sqrt{\omega_2}^nn!}, ~~~~~ \Delta^R(\by_{n-1}) = \prod_{1\leq i<j\leq n-1}S_2(\pm\iota(y_i-y_j)),\\
\eta^R(\bx_n) = \prod_{1\leq i<j\leq n}S_2(\pm\iota(x_i-x_j)+\omega_2g)^{-1},\\
K^{R*}(\bx_n,\by_{n-1}) = \prod^n_{i=1}\prod^{n-1}_{j=1}S_2(\pm\iota(x_i-y_j)+\omega_2g/2)^{-1}.
\end{split}\eeq
We mean by $S_2(z) = S_2(z|1,\omega_2)$ the two-periodic meromorphic function called the double sine function. We refer to \cite{BDKK1, BDKK2} to its definition and basic properties.

In the following, we assume that all variables are real unless it is directly said the opposite.

As it was mentioned, if we tend $\omega_2\to0^+$, then obtain as a strict limit the interlacing raising operator $\Lambda^*$. We split the procedure onto two statements: derive pointwise limit of integrands and deduce a bound that is uniform in $\omega_2$ by set of absolutely integrable functions.

Let us rewrite the definition of $\Lambda^*$ in a manner similar to \rf{def_lr}
\beqq\begin{split}
\Lambda^*_n(\l)f(\bx_n)=d^*_{n-1}\int_{x_1<y_1<\ldots<y_{n-1}<x_n}\Lambda^*(\bx_n;\by_{n-1};\l)f(\by_{n-1})d\by_{n-1},
\end{split}\eeqq
where
\beq\begin{split}
\Lambda^*(\bx_n;\by_{n-1};\l)= \prod_{1\leq i<j\leq n}|2\sh\pi(x_i-x_j)|^{-2g+1}& e^{2\pi\iota\l(
\underline{\bx_n}-\underline{\by}_{n-1})}\\
\prod_{1\leq i < j\leq n-1}|2\sh\pi(y_i-y_j)|&\prod^n_{i=1}\prod^{n-1}_{j=1}|2\sh\pi(x_i-y_j)|^{g-1}
\end{split}\eeq
and $d^*_n$ is taken from \rf{I3}.

\begin{theorem}\label{thm4.1}
For $g\in(0,+\infty)$ and $x_1<\ldots<x_n$, we have the limit
\beq\begin{split}
\lim_{\omega_2\to 0^+}\big(2\pi\omega_2\big)^{(n-1)g}d^R_{n-1}(1 + \omega_2(1-g))\Lambda^{R*}(\bx_n;\by_{n-1};\l)\\
= \frac{d^*_{n-1}}{(n-1)!}\Lambda^*(\bx_n;\by_{n-1};\l)\sum_{\sigma\in S_{n-1}}\mathrm{Ind}(x_1<y_{\sigma(1)}<\ldots<y_{\sigma(n-1)}<x_n).
\end{split}\eeq
\end{theorem}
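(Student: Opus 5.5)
The plan is to compute the pointwise limit factor by factor, using the standard small-$\omega_2$ asymptotics of the double sine function $S_2(z|1,\omega_2)$ at real-part-zero arguments. For real $t$ there are two regimes of $S_2(\iota t+a\omega_2\,|\,1,\omega_2)$ as $\omega_2\to0^+$:
\begin{itemize}
\item away from $t=0$, it behaves like an exponential of a dilogarithm-type quantity $\frac{1}{\omega_2}\,\mathrm{Li}_2$ times a power $|2\sh\pi t|^{\,\cdot}$ governed by $a$;
\item near $t=0$ (on the scale $t\sim\omega_2$), it behaves like a ratio of Gamma functions.
\end{itemize}
These asymptotics follow from the integral (or product) representation of $S_2$ in \cite{BDKK1, BDKK2}.

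The argument then proceeds in the following steps.

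\begin{enumerate}
\item Write each pair $S_2(\pm\iota t+c)$ as a product. Apply the asymptotics to the three types of factors:
\begin{itemize}
\item $\Delta^R$, with argument pairs $\pm\iota(y_i-y_j)$;
\item $(\eta^R)^{-1}$, with $\pm\iota(x_i-x_j)+\omega_2 g$;
\item $K^{R*}$, with $\pm\iota(x_i-y_j)+\omega_2g/2$.
\end{itemize}
\item The key observation concerns the phase terms. For pairs $S_2(\iota t+c)S_2(-\iota t+c)$, the leading $\frac{1}{\omega_2}$ exponential terms are odd in $t$ and would cancel. However, for the inverted kernel $K^{R*}$ the reflection formula $S_2(z)S_2(1+\omega_2-z)=1$ produces, after combining, a factor that is exponentially small in $1/\omega_2$ unless the sign pattern of $(x_i-y_j)$ is interlacing. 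More precisely:
\begin{itemize}
\item collect the total exponent $\frac{1}{\omega_2}\Phi(\bx,\by)$, where $\Phi$ is piecewise built from $\sum \pm|x_i-y_j|$-type terms (the $\mathrm{Li}_2$ contributions at $\pm$ arguments differ by $\pi^2 t^2$-like polynomials, yielding $|t|$-linear terms);
\item show $\Phi\le 0$ with equality exactly on the interlacing region $x_1<y_{\sigma(1)}<\dots<y_{\sigma(n-1)}<x_n$;
\item this is the same combinatorial identity as \rf{Q34} (\cite[Lemma 3.6]{R}), which vanishes precisely under interlacing and is strictly negative otherwise.
\end{itemize}
This yields the indicator $\sum_\sigma \mathrm{Ind}(\cdot)$. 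The division by $(n-1)!$ accounts for symmetrization over orderings of $\by_{n-1}$.
\item On the interlacing region, track the subleading (power-law) terms. Each $x$–$y$ pair contributes $|2\sh\pi(x_i-y_j)|^{g-1}$, each $y$–$y$ pair contributes $|2\sh\pi(y_i-y_j)|$ (the Sklyanin measure degenerates to $|2\sh|^{2}$, divided by the $\omega_2$-powers), and each $x$–$x$ pair contributes $|2\sh\pi(x_i-x_j)|^{1-2g}$.
\item Use the leftover powers of $\omega_2$ and the Gamma factors from $S_2(\omega_2 g)\sim \Gamma(g)(2\pi\omega_2)^{\cdot}/\sqrt{2\pi}$ to match $(2\pi\omega_2)^{(n-1)g}$ and the normalization ratio $d^{R}_{n-1}(1+\omega_2(1-g))\to d^*_{n-1}/(n-1)!$. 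This is a routine bookkeeping of exponents; note that the reflected coupling $1+\omega_2(1-g)$ is exactly what converts $\Gamma(g)$ into $\Gamma(g)^{-1}$ in $d^*$.
\end{enumerate}

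The main obstacle is step 2: verifying that the $1/\omega_2$ exponent is nonpositive with zero set exactly the interlacing configurations, and handling the sign conventions of the $\mathrm{Li}_2$ asymptotics carefully so that the linear-in-$|t|$ terms combine into the Rains functional. I would first do the case $n=2$ explicitly, where a single $y$ must lie between $x_1$ and $x_2$, to fix signs. I would then invoke the identity \rf{Q34} together with its strict-inequality version for non-interlacing configurations, as in \cite{BDKK2}. Boundary points where some $y_j=x_i$ form a measure-zero set and do not affect the pointwise statement.
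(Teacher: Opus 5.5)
Your skeleton is the paper's: factorwise pointwise limits, a single $1/\omega_2$ exponent turned into the indicator by Rains' lemma (Lemma \ref{lem4.1}), and the $S_2(\omega_2 z)$ asymptotics for the constant. However, step 2, which you rightly call the crux, rests on a false claim, and as written it cannot produce the indicator. For real $c$ one has $S_2(-\iota t+c)=\overline{S_2(\iota t+c)}$, so $S_2(\pm\iota t+c)=|S_2(\iota t+c)|^2$. What cancels within the pair is the odd phase; the real part of the exponent is even in $t$ and doubles. Concretely, $S_2(\pm\iota t)=|2\sh\pi t|\,|2\sh(\pi t/\omega_2)|$ exactly, which is of size $e^{\pi|t|/\omega_2}$.

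Hence $\Delta^R(\by_{n-1})$ and $(\eta^R)^{-1}(\bx_n)$ are exponentially large, of orders $e^{\pi\sum_{i<j}|y_i-y_j|/\omega_2}$ and $e^{\pi\sum_{i<j}|x_i-x_j|/\omega_2}$. By contrast, $K^{R*}$ is of order $e^{-\pi\sum_{i,j}|x_i-y_j|/\omega_2}$. Being a product over single differences $x_i-y_j$, $K^{R*}$ is exponentially small at every configuration of distinct points, so it cannot detect interlacing by itself. The indicator arises only from the balance of all three exponents, which is exactly the functional of Lemma \ref{lem4.1}.

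If the leading exponents of $\Delta^R$ and $(\eta^R)^{-1}$ cancelled as you claim, nothing would compensate the decay of $K^{R*}$, and the limit would vanish identically. Already for $n=2$, the needed factor $e^{\pi|x_1-x_2|/\omega_2}$ comes from $(\eta^R)^{-1}$ alone. Your sub-bullet citing \rf{Q34} tacitly uses the correct three-term functional, so the proposal is internally inconsistent at this point. Similarly, in step 3 the Sklyanin factor is not ``$|2\sh|^2$ divided by $\omega_2$-powers'' but $|2\sh\pi t|$ times the exponentially large $|2\sh(\pi t/\omega_2)|$.

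The repair, which is the paper's argument, needs no $\mathrm{Li}_2$ asymptotics at all. Write $S_2(\pm\iota t+\omega_2 c)=S_2(\pm\iota t)\cdot S_2(\pm\iota t+\omega_2 c)/S_2(\pm\iota t)$. Use the exact formula for $S_2(\pm\iota t)$ above, together with the ratio limit $S_2(\iota t)/S_2(\iota t+\omega_2 g)\to(2\iota\sh\pi t)^g$, which makes the second factor tend to $|2\sh\pi t|^{-2c}$. Take $c=g$ for $(\eta^R)^{-1}$, and the inverse with $c=g/2$ for $K^{R*}$. Every factor then becomes $e^{\pm\pi|t|/\omega_2}$ times the correct power, $(1+o(1))$: $|2\sh\pi t|^{1-2g}$ for $(\eta^R)^{-1}$, $|2\sh\pi t|^{g-1}$ for $K^{R*}$, and $|2\sh\pi t|$ for $\Delta^R$. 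Lemma \ref{lem4.1}, applied after sorting $\by_{n-1}$, then yields the sum of indicators. Finally, fix your $\Gamma$-convention against the paper's $\omega_2^{z-1/2}S_2(\omega_2 z)\to(2\pi)^{1-z}\Gamma(z)^{-1}$: there $S_2(\omega_2 g)$ carries $\Gamma(g)^{-1}$, not $\Gamma(g)$.
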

 Here  $\text{Ind}(\mathrm{A})$  means the logical function equal to $1$ once the condition $\mathrm{A}$ is satisfied and zero otherwise. 
\begin{proof}
Main tool is the limit for $g > 0$
\beq\begin{split}\label{pnt_s2}
\lim_{\omega_2\to 0^+}\frac{S_2(\iota x)}{S_2(\iota x+\omega_2g)} = \big(2\iota\sh\pi x\big)^g, ~~~~~ x\in\mathbb{R},
\end{split}\eeq

Sklyanin measure is directly equal to
\beq\begin{split}\label{skl_msr}
\prod_{1\leq i<j\leq n-1}|2\sh\pi\omega^{-1}_2(y_i-y_j)||2\sh\pi(y_i-y_j)|,
\end{split}\eeq
therefore,
\beq\begin{split}\label{lim_del}
\lim_{\omega_2\to0^+}\exp\pi\omega^{-1}_2\big[-\sum_{1\leq i<j\leq n-1}|y_i-y_j|\big]\Delta^R(\by_{n-1}) = \prod_{1\leq i<j\leq n-1}|2\sh\pi(y_i-y_j)|.
\end{split}\eeq

Applying \rf{pnt_s2}, we immediately derive
\beq\begin{split}\label{lim_etk}
\lim_{\omega_2\to0^+}\exp\pi\omega^{-1}_2\big[\sum_{1\leq i<j\leq n}|x_i-x_j|\big]\eta^R(\bx_n) = \prod_{1\leq i<j\leq n}|2\sh\pi(x_i-x_j)|^{2g-1};\\
\lim_{\omega_2\to0^+}\exp\pi\omega^{-1}_2\big[\sum^n_{i=1}\sum^{n-1}_{j=1}|x_i-y_j|\big]K^{R*}(\bx_n;\by_{n-1}) = \prod^n_{i=1}\prod^{n-1}_{j=1}|2\sh\pi(x_i-y_j)|^{g-1}.
\end{split}\eeq

The remaining part is $d^R_{n-1}(1+\omega_2(1-g))$. We will use another general limit for $\Re z > 0$
\beq\begin{split}
\lim_{\omega_2\to0^+}\omega_2^{z-1/2}S_2(\omega_2z|1,\omega_2) = (2\pi)^{1-z}\Gamma(z)^{-1}.
\end{split}\eeq
It follows that
\beq\begin{split}
\lim_{\omega_2\to0^+}\omega^{(n-1)g}_2d^R_{n-1}(1+\omega_2(1-g)) = \frac{(2\pi)^{(n-1)(1-g)}\Gamma(g)^{-(n-1)}}{(n-1)!}.
\end{split}\eeq

Consider accumulated exponential terms from \rf{lim_del} and \rf{lim_etk} (do not forget the corresponding powers from $\Lambda^{R*}$)
\beq\begin{split}
\exp\pi\omega^{-1}_2\Big[\sum_{1\leq i<j\leq n}|x_i-x_j| - \sum^n_{i=1}\sum^{n-1}_{j=1}|x_i-y_j| + \sum_{1\leq i<j\leq n-1}|y_i-y_j|\Big].
\end{split}\eeq
It turns out that the exponent itself tends to sum of well-defined indicator function as $\omega_2\to0^+$. This follows from Rains lemma \cite[Lemma3.6]{R} (its partial case was used in \rf{Q34}).

\begin{lemma}
\label{lem4.1}
Assume that $x_1<\ldots<x_n$, $y_1<\ldots<y_{n-1}$, and all variables are pairwise distinct. Then
\beq\begin{split}
\sum_{1\leq i<j\leq n}|x_i-x_j| - \sum^n_{i=1}\sum^{n-1}_{j=1}|x_i-y_j| + \sum_{1\leq i<j\leq n-1}|y_i-y_j| \leq 0
\end{split}\eeq
with zero value exactly at region $x_1<y_1<\ldots<x_{n-1}<y_{n-1}<x_n$.
\end{lemma}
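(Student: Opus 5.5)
The plan is to merge all $2n-1$ points into one sorted sequence and write the left-hand side as a sum over pairs of points weighted by signs. Put $x$-points with sign $\epsilon=+1$ and $y$-points with sign $\epsilon=-1$. Then the expression equals
\beqq F=\sum_{\text{pairs } p<q}\epsilon_p\epsilon_q\,|p-q|,\eeqq
since $x$–$x$ and $y$–$y$ pairs carry $+$ and $x$–$y$ pairs carry $-$. Because $|p-q|=\int_{\R}\mathrm{Ind}(\min(p,q)<t<\max(p,q))\,dt$, we get
\beqq F=\int_{\R} \phi(t)\,dt,\qquad \phi(t)=\sum_{p<t<q}\epsilon_p\epsilon_q = L(t)R(t),\eeqq
where $L(t)=\#\{x_i<t\}-\#\{y_j<t\}$ and $R(t)=\#\{x_i>t\}-\#\{y_j>t\}$. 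Since there are $n$ points $x$ and $n-1$ points $y$, we have $L(t)+R(t)=1$ for $t$ off the points. Hence $\phi(t)=L(t)(1-L(t))$, and this is $\le 0$ for every integer $L(t)$, with equality iff $L(t)\in\{0,1\}$. This gives $F\le0$.

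For the equality case, $\phi$ is piecewise constant on the intervals between consecutive points. So $F=0$ iff $L(t)\in\{0,1\}$ on every gap between consecutive points, which is only a nonempty-measure condition there (outside $[\min,\max]$ we automatically have $L\in\{0,1\}$). Scanning from the left, $L$ jumps by $+1$ at each $x$ and by $-1$ at each $y$, and starts at $0$. Staying in $\{0,1\}$ forces the points to alternate $x,y,x,y,\dots$, beginning with an $x$: starting with a $y$ would give $L=-1$, and two consecutive equal types would give $L=2$ or $-1$. Since there are $n$ points $x$ and $n-1$ points $y$, alternation starting with $x$ ends with $x$, which is exactly $x_1<y_1<x_2<\dots<y_{n-1}<x_n$. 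Conversely, on that region $L$ alternates between $1$ and $0$, so $F=0$.

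The only delicate step is the identity $\sum_{p<t<q}\epsilon_p\epsilon_q=L(t)R(t)$ and the bookkeeping that every pair is counted once with the correct sign. This is routine once the pairs are arranged by their position relative to $t$. Pairwise distinctness of the points ensures that the gaps have positive length, so that a strictly negative $\phi$ on a gap gives $F<0$.
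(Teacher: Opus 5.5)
Your proof is correct and complete. The layer-cake identity $|p-q|=\int\mathrm{Ind}(\min(p,q)<t<\max(p,q))\,dt$ turns the left-hand side into $\int_{\R}L(t)\bigl(1-L(t)\bigr)\,dt$ with $L$ integer-valued. Your equality analysis ($L\in\{0,1\}$ on every gap of positive length) correctly forces the pattern $x,y,x,\dots,y,x$.

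The paper gives no argument of its own for this lemma. It invokes Rains' lemma \cite[Lemma~3.6]{R}, and then uses the sharper Lemma~\ref{lem4.2}, taken from \cite[Appendix~B.2]{BCDK}. That lemma writes the same expression as $\sum_{i=1}^{n-1}2\alpha(i)\,(y_i-x_{f(i)})$ with every summand nonpositive.

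Your route has several advantages. It is self-contained and gives the inequality and the equality set in one stroke. It makes transparent why the counts $n$ versus $n-1$ matter, through $L+R=1$. It also never uses the orderings $x_1<\dots<x_n$, $y_1<\dots<y_{n-1}$ except to name the equality region.

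The paper's decomposition buys something different. It isolates each $y_i$ in a single linear term. This is exactly what the proof of Theorem~\ref{thm4.2} needs to integrate out one variable at a time with exponential decay in $\omega_2^{-1}$.

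Your formula also gives quantitative versions of the ``moreover'' part of Lemma~\ref{lem4.2} immediately. For example, if $y_1<x_1$ then $L(t)\le -1$ on $(y_1,x_1)$, so $F\le -2(x_1-y_1)$. Similarly, if $y_{n-1}>x_n$ then $F\le -2(y_{n-1}-x_n)$.
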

\color{black}
As a result, we obtain the desired limit
\beq\begin{split}
\lim_{\omega_2\to0^+}\big(2\pi\omega_2\big)^{(n-1)g}d^R_{n-1}(1+\omega_2(1-g))
\eta^R(\bx_n)^{-1}e^{2\pi\iota\l(\underline{\bx}_n-\underline{\by}_{n-1})}\\
K^{R*}(\bx_n;\by_{n-1})\Delta^R(\by_{n-1}) = \frac{d^*_{n-1}}{(n-1)!}\eta(\bx_n)^{-1}e^{2\pi\iota\l(\underline{\bx}_n-\underline{\by}_{n-1})}\\K^*(\bx_n;\by_{n-1})\Delta(\by_{n-1})\sum_{\sigma\in S_{n-1}}\mathrm{Ind}(x_1<y_{\sigma(1)}<\ldots< y_{\sigma(n-1)} <x_n).
\end{split}\eeq
\end{proof}

To make motivating pointwise limit from Theorem \ref{thm4.1} the strict one, we would like to provide a set of bounds for integrand $\Lambda^{R*}(\bx_n;\by_{n-1};\l)$ that happen to be absolutely integrable. Additionally, we have to show the absolute convergence of the limiting operator which was done at Proposition \ref{proposition1a} due to \rf{le_lst}. Thus, dominated convergence theorem proves the limit of operators over the space of bounded functions $f(\by_{n-1})$.

\begin{proposition}\label{prp4.1}
Fix sufficiently small $\epsilon > 0$ and assume that $x_i\neq x_j$ for $1\leq i<j\leq n$. One may show that
\beq\begin{split}\label{prp_bnd1}
\Big|\Lambda^{R*}(\bx_n;\by_{n-1};\l)\Big| \leq& C\Big|\Lambda^*(\bx_n;\by_{n-1};\l)\Big|\times\\
&\frac{\prod_{1\leq i<j\leq n}|\sh\pi\omega^{-1}_2(x_i-x_j)|\prod_{1\leq i<j\leq n-1}|\sh\pi\omega^{-1}_2(y_i-y_j)|}{\prod^n_{i=1}\prod^{n-1}_{j=1}|\sh\pi\omega^{-1}_2(x_i-y_j)|}
\end{split}\eeq
for $\by_{n-1}\in\mathbb{R}^{n-1}$ such that $y_i\neq y_j$ for $1\leq i<j\leq n-1$, and $\omega_2\in(0,\epsilon)$.
\end{proposition}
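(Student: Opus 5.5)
We bound each double-sine factor of $\Lambda^{R*}$ separately, uniformly in $\omega_2\in(0,\epsilon)$, by its rational-limit counterpart times the appropriate $\sh\pi\omega_2^{-1}$ factor. The kernel \rf{def_lr} factorizes into three products: $\eta^R(\bx_n)^{-1}$ over pairs $x_i,x_j$, $K^{R*}$ over pairs $x_i,y_j$, and $\Delta^R$ over pairs $y_i,y_j$. The exponential $e^{2\pi\iota\l(\bbx_n-\bby_{n-1})}$ is unimodular for real $\l$ and coincides in both kernels. It therefore suffices to prove three one-variable inequalities for real $t\neq 0$ and small $\omega_2$:
\beqq
\bigl|S_2(\pm\iota t)\bigr| = 4|\sh\pi\omega_2^{-1}t|\,|\sh\pi t|,
\eeqq
\beqq
\bigl|S_2(\pm\iota t+\omega_2 g)\bigr|^{-1}\leq C\,|2\sh\pi t|^{1-2g}\,|\sh\pi\omega_2^{-1}t|,
\eeqq
\beqq
\bigl|S_2(\pm\iota t+\omega_2 g/2)\bigr|^{-1}\leq C\,|2\sh\pi t|^{g-1}\,|\sh\pi\omega_2^{-1}t|^{-1}.
\eeqq
Here $S_2(\pm z)$ denotes the product of the two signs. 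Multiplying these over the relevant pairs gives \rf{prp_bnd1} directly.

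The first inequality is the exact identity \rf{skl_msr}, namely $S_2(\iota t)S_2(-\iota t)=4\sh(\pi t)\sh(\pi t/\omega_2)$ up to sign, so it needs no argument. For the other two I will use the reflection relation $S_2(z)S_2(1+\omega_2-z)=1$ to write
\beqq
S_2(\iota t+a)S_2(-\iota t+a)=\frac{S_2(\iota t+a)}{S_2(\iota t+1+\omega_2-a)}.
\eeqq
Equivalently, I write the product as $S_2(\iota t)S_2(-\iota t)$ times the two ratios $S_2(\pm\iota t+a)/S_2(\pm\iota t)$. The first factor is known exactly. The ratios $S_2(\iota t+\omega_2 g)/S_2(\iota t)$ converge pointwise to $(2\iota\sh\pi t)^{-g}$ by \rf{pnt_s2}, and analogously with $g/2$. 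What I need is a uniform version: for $\omega_2<\epsilon$ and all real $t\neq 0$,
\beqq
C^{-1}\leq \Bigl|\frac{S_2(\iota t+\omega_2 c)}{S_2(\iota t)}\,(2\sh\pi t)^{c}\Bigr|\leq C
\eeqq
for $c=g$ and $c=g/2$.

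To get this uniform version I would use the integral representation of $\log S_2$, valid in the strip $0<\Re z<1+\omega_2$, or equivalently the product or $q$-Pochhammer form in the variable $e^{2\pi z/\omega_2}$ with $|q|<1$ after a modular transform. In that form the ratio splits into two parts. The first is an explicit $\omega_2$-independent piece $(1-e^{-2\pi|t|})^{c}$, which comes from the $\Gamma$-like factor of period $1$. The second is a correction from the rapidly oscillating period-$\omega_2$ factor, of the form $\prod_k\bigl(1-e^{2\pi(\iota t-\ldots)/\omega_2}\bigr)$ ratios. I then treat two regimes separately. When $|t|\geq\delta$, the correction is $1+O(e^{-2\pi\delta/\omega_2})$ and can be absorbed into the constant. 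When $|t|<\delta$, I rescale $t=\omega_2 s$. The ratio becomes a $\Gamma$-function ratio $\Gamma(\iota s+c)/\Gamma(\iota s)$ times a factor behaving like $\omega_2^{c}$. Stirling's formula bounds this uniformly by $C|s|^{c}$ for large $|s|$, while for small $|s|$ it matches the behaviour of $\sh\pi s$. The two regimes combine to give the factor $|\sh\pi\omega_2^{-1}t|$ relative to the rational power $|t|^c$.

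I expect the main obstacle to be this uniformity in the intermediate regime $\omega_2\ll|t|\ll 1$, where neither the pointwise limit \rf{pnt_s2} nor the small-$t$ Gamma asymptotics applies directly. I would handle it with the uniform Stirling-type expansion of $S_2$ for $\omega_2\to 0$, namely $\log S_2(\iota t+\omega_2 c)=\ldots$ with an error term $O(\omega_2/|t|)$, made explicit from the integral representation. Assembling the three one-variable bounds with their multiplicities then gives \rf{prp_bnd1}, with $C$ depending only on $n$, $g$ and $\epsilon$. The singular factors $|\sh\pi\omega_2^{-1}(x_i-y_j)|^{-1}$ are harmless when $x_i\ne y_j$, and the statement excludes nothing more.
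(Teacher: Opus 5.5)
Your overall structure matches the paper's proof. You split $\Lambda^{R*}$ into the $\eta^R$, $K^{R*}$ and $\Delta^R$ products, use the exact identity \rf{skl_msr} for $S_2(\pm\iota t)$, and write each shifted factor as $S_2(\pm\iota t)$ times the ratios $S_2(\pm\iota t+\omega_2 c)/S_2(\pm\iota t)$. The paper then imports a one-sided uniform bound on this ratio from BCDK (Propositions A.2--A.3). The step that fails is the lemma you propose to prove instead, namely the two-sided bound
\[
C^{-1}\leq \Bigl|\frac{S_2(\iota t+\omega_2 c)}{S_2(\iota t)}\,(2\sh\pi t)^{c}\Bigr|\leq C \qquad (t\in\R\setminus\{0\}).
\]
For $c\neq 1$ this is false already at fixed $\omega_2$. $S_2$ has a simple zero at $0$, which is where the double zero in \rf{skl_msr} comes from, while $S_2(\omega_2 c)$ is finite and nonzero. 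So the middle quantity behaves like a constant times $|t|^{c-1}$ as $t\to 0$. In your variable $t=\omega_2 s$ it is approximately $|s|^{c}\,|\Gamma(\iota s)/\Gamma(\iota s+c)|\sim |s|^{c-1}/\Gamma(c)$. Your own remark that the Gamma ratio is linear in $s$ for small $s$ therefore exhibits the failure. The factor $|\sh\pi\omega_2^{-1}t|$ is already supplied exactly by $S_2(\pm\iota t)$ and cannot absorb it. Hence for $c<1$ the upper bound fails and for $c>1$ the lower one does. The genuine difficulty sits in the regime $|t|\lesssim\omega_2$, which you treated as settled, and not only in $\omega_2\ll|t|\ll 1$.

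Less is actually needed. For $K^{R*}$ you only need $|S_2(\iota t)/S_2(\iota t+\omega_2 g/2)|\leq C|\sh\pi t|^{g/2}$ uniformly in $t$; this is the paper's inequality, used with exponent $g/2$. The computation above shows it can hold only when $g/2\leq 1$. For $g>2$ the estimate \rf{prp_bnd1} itself breaks down as $y_j\to x_i$. For example, for $g=4$ the functional equation gives
\[
\frac{1}{S_2(\pm\iota t+2\omega_2)}=\frac{4\sh\pi t\,(\sin^2\pi\omega_2+\sh^2\pi t)}{\sh(\pi t/\omega_2)}\longrightarrow 4\omega_2\sin^2\pi\omega_2>0 \quad (t\to 0),
\]
whereas the matching factor on the right of \rf{prp_bnd1} vanishes like $\omega_2 t^2$.

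For $(\eta^R)^{-1}$ no uniformity in $t$ is needed, because $\bx_n$ is fixed with distinct entries. The pointwise limit \rf{lim_etk} gives the bound with $C=C(\bx_n)$ for small $\omega_2$. A bound uniform in $\bx_n$, as in your final sentence, fails for $g<1$, since the relevant quotient behaves like $|x_i-x_j|^{2g-2}$ near the diagonal.

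Your second display also has the wrong exponent. $(\eta^R)^{-1}$ contains $S_2(\pm\iota t+\omega_2 g)$, so you need an upper bound on $|S_2(\pm\iota t+\omega_2 g)|$, not on its inverse.

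With these corrections, your plan amounts to reproving the one-sided BCDK estimate for exponent $g/2\leq 1$. Your large-$|t|$ and intermediate-regime analysis is the right kind of argument for that.
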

\begin{proof}
Use \cite[Proposition A.2, Proposition A.3]{BCDK} to deduce for $g>0$
\beq\begin{split}
\bigg|\frac{S_2(\iota x)}{S_2(\iota x + \omega_2g)}\bigg| \leq C|\sh\pi x|^g, ~~~~~ x\in\mathbb{R}, ~~~\omega_2\to0^+,
\end{split}\eeq
where $C$ does not depend on the choice of either $x$ or $\omega_2$.

Then, in measure $\Delta^R(\by_{n-1})$, we obtain directly what is needed (see \rf{skl_msr}). As for the kernel function $K^{R*}(\bx_n;\by_{n-1})$, we may rewrite all the terms in a manner
\beq\begin{split}
\frac{1}{\big|S_2(\pm\iota(x_i-y_j)+\omega_2g/2)\big|}=\qquad&\\
\Bigg|\frac{S_2(\pm\iota(x_i-y_j))}{S_2(\pm\iota(x_i-y_j)+\omega_2g/2)}\Bigg|&\frac{1}{|2\sh\pi\omega^{-1}_2(x_i-y_j)||2\sh\pi(x_i-y_j)|}\\
&\leq C|\sh\pi(x_i-y_j)|^{g-1}|\sh\pi\omega^{-1}_2(x_i-y_j)|^{-1}.
\end{split}\eeq
\end{proof}

We are going to show that the limiting kernel $\Lambda^*(\bx_n;\by_{n-1};\l)$ is absolutely integrable for any $g>0$. From the side of operator, it says that $\Lambda^*_n$ is defined over the space of bounded functions $f(\by_{n-1})$ along $[x_1,x_n]$.

\begin{proposition}\label{prp4.2}
Assume that $x_1<\ldots<x_n$ and $g\in(0, \infty)$. The function $\Lambda^*(\bx_n;\by_{n-1};\l)$ is absolutely integrable over $\by_{n-1}$ satisfying $x_1<y_1<\ldots<x_{n-1}<y_{n-1}<x_n$.
\end{proposition}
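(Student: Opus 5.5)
We reduce the statement to Lemma \ref{le_lst}. On the region $x_1<y_1<\ldots<x_{n-1}<y_{n-1}<x_n$ all variables lie in the compact interval $[x_1,x_n]$. The factor $e^{2\pi\iota\l(\underline{\bx}_n-\underline{\by}_{n-1})}$ has modulus at most $e^{2\pi|\Im\l|\,n\max|x_i|}$. The prefactor $\prod_{i<j}|2\sh\pi(x_i-x_j)|^{-2g+1}$ is a finite constant, since the $x_i$ are fixed and pairwise distinct. Thus only the finite singularities of
\beqq \prod_{1\leq i<j\leq n-1}|2\sh\pi(y_i-y_j)|\prod_{i=1}^n\prod_{j=1}^{n-1}|2\sh\pi(x_i-y_j)|^{g-1} \eeqq
have to be controlled. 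These occur only on the hyperplanes $y_j=x_i$, and by the interlacing only $y_j=x_j$ and $y_j=x_{j+1}$ can actually be reached.

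\textbf{Step 1.} We replace hyperbolic sines by their arguments. For $u$ in the bounded set $[-(x_n-x_1),x_n-x_1]$ we have $c|u|\leq|2\sh\pi u|\leq C|u|$ with $c,C>0$ depending only on $x_n-x_1$. Hence, for either sign of $g-1$,
\beqq |2\sh\pi(x_i-y_j)|^{g-1}\leq C'|x_i-y_j|^{g-1}, \qquad |2\sh\pi(y_i-y_j)|\leq C|y_i-y_j|. \eeqq
So $|\Lambda^*|$ is bounded by a constant times $\prod_{i,j}|x_i-y_j|^{a}\prod_{i<j}|y_i-y_j|$ with $a=g-1$.

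\textbf{Step 2.} For $g>0$ we have $a=g-1\in(-1,+\infty)$. Lemma \ref{le_lst} therefore bounds the integral of the majorant over the interlacing region by $C_n(a)\prod_{i<j}|x_i-x_j|^{2g-1}$. This is finite because the $x_i$ are distinct, and absolute integrability follows. When $g\geq1$ the factors $|x_i-y_j|^{g-1}$ are bounded and the statement is trivial on a compact set. The content is the case $0<g<1$, which is exactly where the Lemma is needed.

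\textbf{Main obstacle.} The only real difficulty is Lemma \ref{le_lst} itself, whose proof is deferred to the Appendix. If we had to prove it here, we would argue by induction on $n$ via the Selberg/Dirichlet-type integral. After the affine change $y_j=x_j+(x_{j+1}-x_j)s_j$, each variable ranges over its own interval $(x_j,x_{j+1})$ and the integral factorizes up to bounded cross terms. Each one-dimensional piece $\int_0^1 s^a(1-s)^a\,ds$ converges precisely for $a>-1$. The far factors $|x_i-y_j|^a$ with $i\neq j,j+1$ are bounded above by powers of the gaps $|x_i-x_k|$ when $a<0$, and similarly bounded when $a\ge0$. This produces the stated power $2a+1$ of $|x_i-x_j|$. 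For the present proposition it suffices to invoke the Lemma as stated.
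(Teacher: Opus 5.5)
Your proof is correct and follows the paper's own short argument. On the compact interlacing region you replace $|2\sinh\pi u|$ by $|u|$ up to constants, which the paper only asserts ``has to be equivalent since we sit on a compact'', and then you invoke Lemma~\ref{le_lst} with $a=g-1\in(-1,+\infty)$.

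One caveat concerns only your closing aside. Bounding the far factors by gap powers and factorizing into $\int_{x_j}^{x_{j+1}}|y-x_j|^{a}|y-x_{j+1}|^{a}\,dy$ does give finiteness for fixed distinct $x_i$, so it would prove the proposition even without the Lemma. For $a<0$, however, it does not give the uniform bound $C_n(a)\prod_{i<j}|x_i-x_j|^{2a+1}$ you attribute to it. Already for $n=3$, with gaps $u,v$, it overshoots by the unbounded factor $\bigl((u+v)^2/(uv)\bigr)^{-a}$. That is why the appendix needs the Vandermonde splitting and Lemma~\ref{le_c}.
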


The proof is due to the Lemma \ref{le_lst}. We place $a = g-1$ and obtain the desired integrability, but for rational functions. The case of the hyperbolic ones has to be equivalent since we sit on a compact.

To deduce the operator limit $\Lambda^{R*}\mapsto\Lambda^*$, an extremely important statement is the following lemma, which strengthen the statement of Rains lemma. It helps to move from analyzing asymptotics of multidimensional integrals to bounds for 1-dimensional case.
\begin{lemma}\label{lem4.2}
Consider pairwise distinct variables $\bx_n$, $\by_{n-1}$. Assume that $x_1<\ldots<x_n$ and $y_1<\ldots<y_{n-1}$. Then there exist $\alpha(i) \in\mathbb{Z}$ and 
injection $f : \{1,\ldots,n-1\}\to\{1,\ldots,n\}$, such that
\beq\begin{split}
\sum_{1\leq i<j\leq n}|x_i-x_j| - \sum^n_{i=1}\sum^{n-1}_{j=1}|x_i-y_j|+\sum_{1\leq i<j\leq n-1}|y_i-y_j|
= \sum^{n-1}_{i=1}2\alpha(i)(y_i-x_{f(i)})
\end{split}\eeq
and
 $$\alpha(i)(y_i-x_{f(i)}) \leq 0 ~~~ \mathrm{for~all} ~~~ i=1,\ldots,n-1.$$
Moreover, if either $y_j<x_1$ or $x_n<y_j$, then $\alpha(j)(y_j-x_{f(j)}) < 0$.
\end{lemma}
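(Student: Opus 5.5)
The plan is to turn the left-hand side into a one-dimensional integral of an integer-valued step function, and then peel off adjacent $x$--$y$ pairs one at a time by induction on $n$.

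\textbf{Step 1: an integral formula.} Put all $2n-1$ points on the line, with charge $q=+1$ at each $x_i$ and $q=-1$ at each $y_j$. Denote the left-hand side by $E(\bx_n;\by_{n-1})$. Then
\beqq E(\bx_n;\by_{n-1})=\sum_{p<p'}q_pq_{p'}|u_p-u_{p'}|,\eeqq
where $u_p$ runs over all points. Write $|u_p-u_{p'}|=\int \mathrm{Ind}(t \text{ lies between } u_p,u_{p'})\,dt$. For a given $t$, let $N(t)$ be the total charge to the left of $t$; the total charge is $1$, so the charge to the right is $1-N(t)$. This gives
\beqq E(\bx_n;\by_{n-1})=\int_{\R}N(t)\bigl(1-N(t)\bigr)\,dt=-\int_{\R}N(t)\bigl(N(t)-1\bigr)\,dt .\eeqq
Since $N(t)\in\mathbb{Z}$, the integrand $N(N-1)$ is a nonnegative even integer. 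This already recovers Lemma \ref{lem4.1}, and it is the form I will manipulate.

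\textbf{Step 2: induction by removing an adjacent pair.} Suppose $x_a$ and $y_b$ are consecutive in the merged order. Deleting them changes $N$ only on the gap between them, and there it changes by $\pm1$. On that gap $N$ is constant, say $N\equiv M$. Take first the case $x_a<y_b$. The reduced function is $N'=M-1$ on the gap. Since $M(M-1)-(M-1)(M-2)=2(M-1)$, we get
\beqq E=E'+2\alpha\,(y_b-x_a),\qquad \alpha=-(M-1)\in\mathbb{Z}.\eeqq
In the opposite case $y_b<x_a$ we have $N'=M+1$ and obtain $\alpha=M$. I then set $f(b)=a$ and apply induction to the configuration with $n-1$ $x$'s and $n-2$ $y$'s. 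The map $f$ is injective because every step removes one $x$. When $n=1$ there are no $y$'s and $E=0$. Note that the reduced configuration is again ordered, because deleting points preserves the order of the rest.

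\textbf{Step 3: choosing the pair so that the signs are right.} The condition $\alpha(y_b-x_a)\le 0$ requires $M\ge1$ in the case $x_a<y_b$, and $M\le0$ in the case $y_b<x_a$. Since $N=0$ near $-\infty$ and $N=1$ near $+\infty$, we always have $\max N\ge 1$ and $\min N\le 0$.
\begin{itemize}
\item A maximal interval on which $N=\max N$ is bounded by an $x$ on its left and a $y$ on its right, and these two are adjacent. This gives an admissible pair.
\item Symmetrically, an interval on which $N=\min N$ is bounded by a $y$ on the left and an $x$ on the right, again adjacent and admissible.
\end{itemize}
So at every stage an admissible adjacent pair exists, and the induction closes.

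\textbf{Main obstacle: the strictness claim.} The hard part is to show $\alpha(j)(y_j-x_{f(j)})<0$ whenever $y_j<x_1$ or $y_j>x_n$. My plan is to fix the order of removals as follows.
\begin{itemize}
\item As long as $\min N<0$, remove a min-run pair; its coefficient $\alpha=M=\min N$ is strictly negative. Similarly, as long as $\max N>1$, remove a max-run pair; its coefficient is $-(M-1)<0$.
\item Once $N\in\{0,1\}$ everywhere, the remaining points interlace, and $E$ vanishes for what is left.
\end{itemize}
I then need to check that an outer $y_j$ cannot survive into this interlacing remainder. I expect this because an interlacing remainder begins and ends with an $x$. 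So if $y_j<x_1$ were still present, the leftmost remaining point would be $y_j$ (the original $x_1$ is gone, and the remaining $x$'s are at least $x_1>y_j$), which is impossible. The case $y_j>x_n$ is symmetric. Consequently every outer $y_j$ is removed at a step with strictly negative coefficient. The one remaining detail is that at each step the gap length is nonzero, which holds because the points are pairwise distinct; this yields the strict inequality.
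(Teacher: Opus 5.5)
Your argument is correct, and it takes a different route from the paper. The paper does not prove the decomposition itself. It quotes the first part from Appendix B.2 of BCDK, and for the strict inequality it gives only a one-line remark treating $j=1$. In that remark the sign of $y_1-x_{f(1)}$ is misprinted: for $y_1<x_1$ it is negative.

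You instead encode the left-hand side as $E=-\int_{\R}N(t)\bigl(N(t)-1\bigr)\,dt$ with the charge-counting step function. This gives Lemma \ref{lem4.1} for free and reduces the decomposition to peeling adjacent $x$--$y$ pairs that bound extremal runs of $N$. Your removal order has two phases: first every run with $N<0$ or $N>1$, each contributing a strictly negative term, then the interlacing remainder. This proves the strict inequality for every outer $y_j$ at once, using the observation that an interlacing configuration starts and ends with an $x$.

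So your proof is self-contained and establishes the full ``moreover'' clause, whereas the paper's is short only because it is outsourced. In the application (Theorem \ref{thm4.2}) any $\alpha,f$ satisfying the lemma suffice, so nothing is lost by using your construction.

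There are two slips to fix.

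\textbf{Sign of $\alpha$ when $y_b<x_a$.} In this case
$E-E'=-(x_a-y_b)\bigl(M(M-1)-(M+1)M\bigr)=2M(x_a-y_b)$, so $\alpha=-M$, not $M$. Check with $n=2$, $y_1<x_1<x_2$: there $E=2(y_1-x_1)$ and $M=-1$. With the corrected sign, your condition $M\le 0$ and your claim that min-run removals with $M<0$ give strictly negative terms are exactly right.

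\textbf{Unbounded extremal runs in Step 3.} An extremal run need not be bounded by an $x$--$y$ pair. If $\max N=1$, it may be attained only on the unbounded interval to the right of all points. If $\min N=0$, it may be attained only on the unbounded interval to the left. This is harmless:
\begin{itemize}
\item If $\max N\ge 2$ or $\min N\le -1$, the extremal run is a bounded gap, since both tails have $N\in\{0,1\}$.
\item Otherwise $N\in\{0,1\}$ everywhere, the points interlace, and $E=0$. You then give the remaining $y$'s the coefficient $\alpha=0$ with any injective assignment, e.g.\ each to the $x$ immediately to its left.
\end{itemize}
Your final phase needs this explicit assignment anyway.
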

\begin{proof}
The first part of statements was proven in \cite[Appendix B.2]{BCDK}. The second one has to follow from the following remark:

If $y_j<x_1$, then $\alpha(1) = 1$, $y_1-x_{f(1)}>0$, so $\alpha(1)(y_1-x_{f(1)}) < 0$. Similar holds true for $x_n<y_j$.
\end{proof}

\begin{theorem}\label{thm4.2}
Consider a bounded function $f(\by_{n-1})$ that is defined everywhere along $\by_{n-1}\in\mathbb{R}^{n-1}$. Then for $g > 1$ we establish the limit $\Lambda^{R*}_n(\l)[f](\bx_n) \mapsto\Lambda^*_n(\l)[f](\bx_n)$ as a limit of operators.
\end{theorem}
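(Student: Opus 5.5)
The plan is to apply the dominated convergence theorem to the integral defining $\Lambda^{R*}_n(\l)[f](\bx_n)$, viewed as an integral over all of $\R^{n-1}$. Theorem \ref{thm4.1} supplies the pointwise limit of the normalized integrand. That limit equals $\frac{d^*_{n-1}}{(n-1)!}\Lambda^*(\bx_n;\by_{n-1};\l)f(\by_{n-1})$ times the symmetrized indicator of the interlacing region. Integrating it over $\R^{n-1}$ reproduces $\Lambda^*_n(\l)[f](\bx_n)$, because the $(n-1)!$ orderings of $\by_{n-1}$ cancel the factor $1/(n-1)!$. So it remains to produce an integrable majorant that is uniform in $\omega_2\in(0,\epsilon)$. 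By symmetry of the integrand in $\by_{n-1}$, I restrict to the ordered chamber $y_1<\ldots<y_{n-1}$. The diagonals and the hyperplanes $y_j=x_i$ have measure zero, so they can be discarded.

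On this chamber I start from Proposition \ref{prp4.1}. I rewrite the ratio of products of $|\sh\pi\omega_2^{-1}(\cdot)|$ as $\exp\big(\pi\omega_2^{-1}E(\bx_n,\by_{n-1})\big)$ times a ratio of factors $(1-e^{-2\pi|\cdot|/\omega_2})$, where $E$ is the left-hand side of Lemma \ref{lem4.1}. The numerator factors are at most $1$. The denominator factors are bounded below by $1-e^{-2\pi|x_i-y_j|/\epsilon}\geq c\min(1,|x_i-y_j|)$, with $c$ uniform in $\omega_2<\epsilon$. This gives
\beqq |\Lambda^{R*}|\leq C\,|\Lambda^*|\prod_{i,j}\max(1,|x_i-y_j|^{-1})\,\exp\big(\pi\omega_2^{-1}E\big).\eeqq
Next I apply Lemma \ref{lem4.2}, which writes $E=\sum_i 2\alpha(i)(y_i-x_{f(i)})$ with every term $\leq0$. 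Hence $\exp(\pi\omega_2^{-1}E)\leq 1$. Moreover, for $\omega_2<\epsilon$ it is at most $\exp(\pi\epsilon^{-1}E)$, and this quantity decays exponentially in $|y_j|$ whenever $y_j$ leaves $[x_1,x_n]$, by the last assertion of Lemma \ref{lem4.2}.

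Since $g>1$, every exponent $g-1$ in $|\Lambda^*|$ is positive. Also $x_i\neq x_j$, so the prefactor $\prod|2\sh\pi(x_i-x_j)|^{1-2g}$ is a finite constant. The only finite singularities of the majorant therefore come from the extra factors $|x_i-y_j|^{-1}$. These are cancelled near $y_j=x_i$ by $|\sh\pi(x_i-y_j)|^{g-1}$ provided $g>1$. I expect this step to be the main obstacle. A crude product over all $i$ and $j$ could still leave a net power $|x_i-y_j|^{g-2}$, which is not integrable when $g\leq 1$ and borderline otherwise. To handle it, I would avoid bounding the denominator factor $|\sh\pi\omega_2^{-1}(x_i-y_j)|^{-1}$ by $|x_i-y_j|^{-1}$ blindly. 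Instead I would pair it with the matching numerator exponential inside $E$: near $y_j=x_i$ one has $|\sh(t/\omega_2)|^{-1}\leq C\omega_2/|t|$, and the exponent of $|t|$ in $\Lambda^{R*}$ should then come out as $g-1>0$ after combining with the $S_2$ estimate before dividing. Failing that, I would use the sharper estimate $|S_2(\iota x)/S_2(\iota x+\omega_2 g)|\leq C|\sh\pi x|^{g}$ directly on $1/S_2(\pm\iota x+\omega_2g/2)$. This gives a bound $C|\sh\pi x|^{g-1}\cdot|x|/\min(\omega_2,|x|)$, which is bounded near $x=0$ uniformly once $g>1$. Away from the finitely many hyperplanes the majorant is locally bounded.

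At infinity, the majorant contains $|\Lambda^*|\prod_{i,j}\max(1,|x_i-y_j|^{-1})\le C e^{\pi\sum(g-1)|x_i-y_j|+\pi\sum|y_i-y_j|}$, multiplied by $\exp(\pi\epsilon^{-1}E)$. The latter decays like $e^{-c|y_j|/\epsilon}$ with $c>0$ fixed by Lemma \ref{lem4.2}. Choosing $\epsilon$ small enough, this decay beats the fixed exponential growth of the remaining factors. This yields an $L^1(\R^{n-1})$ majorant independent of $\omega_2$. With $f$ bounded, dominated convergence together with Theorem \ref{thm4.1} and Proposition \ref{prp4.2} gives the claimed operator limit.
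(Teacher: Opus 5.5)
Your main line is correct, but it organizes the limit differently from the paper.

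\textbf{How the two routes differ.} The paper never builds a single $\omega_2$-independent majorant. It cuts $\R^{n-1}$ into $\omega_2$-dependent pieces and treats each separately:
\begin{itemize}
\item on the one-sided neighbourhoods $H_i^\pm$ of width $\omega_2$, it integrates explicitly, using $|\sinh(\pi t/\omega_2)|^{-1}\lesssim \omega_2/|t|$, and gets a contribution $O(\omega_2^{kg})$;
\item on bounded regions off these neighbourhoods, it bounds the factor $B$ by a constant, since $|x_i-y_j|/\omega_2\ge 1$ there;
\item on unbounded regions, it kills the tails with the $\omega_2^{-1}$-weighted exponent from Lemma \ref{lem4.2}.
\end{itemize}
You instead use monotonicity in $\omega_2$ twice: $1-e^{-2\pi|t|/\omega_2}\ge 1-e^{-2\pi|t|/\epsilon}\ge c\min(1,|t|)$, and $\exp(\pi\omega_2^{-1}E)\le \exp(\pi\epsilon^{-1}E)$ for $E\le 0$. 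This gives the fixed majorant $C|\Lambda^*|\prod_{i,j}\max(1,|x_i-y_j|^{-1})\exp(\pi\epsilon^{-1}E)$, and dominated convergence then applies verbatim.

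Your route is shorter and avoids the case analysis over orderings that the paper leaves as a \emph{self-similar picture}. The paper's route gives explicit rates for the vanishing pieces. Both use $g>1$ at the same point: local integrability of $|t|^{g-2}$ for you, finiteness of $\int_0^{\omega_2}\omega_2 t^{g-2}\,dt$ for the paper.

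For the tail, make the decay quantitative. On each ordering chamber, Lemma \ref{lem4.2} forces $\alpha(j)\in\mathbb{Z}\setminus\{0\}$ whenever $y_j\notin[x_1,x_n]$, so $E\le -2\sum_j \mathrm{dist}(y_j,[x_1,x_n])$. This is what lets a small $\epsilon$ beat the fixed exponential growth of $|\Lambda^*|$.

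\textbf{The step you flagged is not an obstacle.} For $g>1$ the net power $|y_j-x_i|^{g-2}$ is integrable; the borderline case is $g=1$, not $g>1$. Each $y_j$ can be close to only one $x_i$, and all remaining exponents are nonnegative when $g>1$. So near a cluster of several $y_j$ at the same $x_i$, your majorant is bounded by $\prod_j |y_j-x_i|^{g-2}$ times factors bounded on compacts, and Tonelli gives local integrability. Your crude bound already closes the argument.

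\textbf{Drop both proposed remedies.} The first (the exponent of $|t|$ coming out as $g-1$) is unsupported. The second is false. The claimed bound $C|\sinh\pi x|^{g-1}|x|/\min(\omega_2,|x|)$ vanishes at $x=0$. But $S_2(\pm\imath x+\omega_2 g/2)^{-1}$ at $x=0$ equals $S_2(\omega_2 g/2)^{-2}$. By the limit formula for $S_2(\omega_2 z)$ used in the proof of Theorem \ref{thm4.1}, this is of order $\omega_2^{g-1}\neq 0$.
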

\begin{proof}
First of all, assume that $x_1<\ldots<x_n$.

According to Proposition \ref{prp4.2}, we deal with absolutely convergent integral in limit. Due to Proposition \ref{prp4.1}, we may consider the convergence of $RHS$ of \rf{prp_bnd1}. It splits onto two questions: singularities and asymptotics away from a compact.

To answer on them, consider one-sided neighbourhoods in vicinity of $x_1,\ldots,x_n$. Define them as
\beq\begin{split}
H^+_i := (x_i,x_i+\omega_2], \qquad H^-_i := [x_i-\omega_2,x_i). 
\end{split}\eeq
We are going to integrate over all possible orderings of the tuple $(x_1,\ldots,x_n,y_1,\ldots,y_{n-1})$ distinguishing sitting inside a neighbourhood and staying outside of it.

Assume that $y_1<\ldots<y_{n-1}$. The strategy is to take 1-dimensional integrals one-by-one with respect to picked ordering.

\pmb{0.} Preparation. First things first, we rewrite $\omega_2$-dependent terms as a combination of asymptotical exponent and the rest ones that can still have singularities. Then we get rid of resulting measure parts since their powers are positive. Namely,
\beq\begin{split}
\prod_{1\leq i<j\leq n}|\sh\pi\omega^{-1}_2(x_i-x_j)|\prod^n_{i=1}\prod^{n-1}_{j=1}|\sh\pi\omega^{-1}_2(x_i-y_j)|^{-1}\prod_{1\leq i<j\leq n-1}|\sh\pi\omega^{-1}_2(y_i-y_j)|\\
\leq \exp\pi\omega^{-1}_2\Big(\sum_{1\leq i<j\leq n}|x_i-x_j|-\sum^n_{i=1}\sum^{n-1}_{j=1}|x_i-y_j|+\sum_{1\leq i<j\leq n-1}|y_i-y_j|\Big)\\
\prod^n_{i=1}\prod^{n-1}_{j=1}\big(1-e^{-2\pi\omega^{-1}_2|x_i-y_j|}\big)^{-1} =: \exp\pi\omega^{-1}_2A(\bx_n;\by_{n-1}) \cdot B(\bx_n;\by_{n-1}).
\end{split}\eeq

Instead of $\big|\Lambda^*(\bx_n;\by_{n-1};\l)f(\by_{n-1})\big|$, we consider
\beq\begin{split}
\exp\Big(c\sum_{1\leq i\leq n-1}|y_i|\Big)\prod^n_{i=1}\prod^{n-1}_{j=1}\big(1-e^{-2\pi|x_i-y_j|}\big)^{g-1} =: C(\bx_n;\by_{n-1})\cdot D(\bx_n;\by_{n-1}).
\end{split}\eeq

\pmb{1.} Integration over neighbourhoods $H_j^+$, $H^-_j$. Suppose, we would like to integrate\\ $y_i < y_{i+1}<\ldots<y_{i+k-1}$ over the neighbourhood $H^+_j$ (for $H^-_j$, everything is similar). Bound everything in $C$ by corresponding boundary points. In $D$, we can omit all terms except for\\ $\prod^{k-1}_{l=0}\big(1-e^{-2\pi|y_{i+l}-x_j|}\big)^{g-1}$. The same is true for $B$. $A$ is non-positive at all. Therefore, it remains to integrate along the following strata configuration (for $g > 1$)
\beq\begin{split}
\int^{x_j+\omega_2}_{x_j}\int^{x_j+\omega_2}_{y_i}\ldots\int^{x_j+\omega_2}_{y_{i+k-2}}\prod^{k-1}_{l=0}\frac{\big(1-e^{-2\pi|y_{i+l}-x_j|}\big)^{g-1}}{1-e^{-2\pi\omega^{-1}_2|y_{i+l}-x_j|}}dy_i\ldots dy_{i+k-1}\\
\leq E\int^{x_j+\omega_2}_{x_j}\int^{x_j+\omega_2}_{y_i}\ldots\int^{x_j+\omega_2}_{y_{i+k-2}}\prod^{k-1}_{l=0}\frac{|y_{i+l}-x_j|^{g-1}}{\omega^{-1}_2|y_{i+l}-x_j|}dy_i\ldots dy_{i+k-1} \leq E'\omega_2^{kg}
\end{split}\eeq
for non-depending on $x_j$ and $\omega_2$ constants $E,E'$.

\pmb{2.} Integration over finite interval staying away from neighbourhoods. Bound $A,B,C,D$ putting corresponding boundary points. There is no problem with $1-e^{2\pi\omega^{-1}_2|y_i-x_j|}$ since minimal possible value of $\omega^{-1}_2|y_i-x_j|$ is 1.

\pmb{3.} Integration over infinite interval staying away from neighbourhoods. Suppose, we integrate along $x_n + 1 < y_i<\ldots<y_{n-1}$. Bound $B$ and $D$. Do the same for all terms of $A$ and $C$ except for $$\exp c|y_{n-1}|~~~\mathrm{and}~~~ \exp\pi\omega^{-1}_2\alpha(n-1)(y_{n-1}-x_{f(n-1)}),$$ where by Lemma \ref{lem4.2} $\alpha(i)(y_i-x_{f(i)}) < 0$ along the integration domain. Since $\omega_2\to 0^+$, we have exponential decreasing for sufficiently small $\omega_2$. Surely, resulting exponent term will contain $\omega^{-1}_2$ with negative coefficient. Namely,
\beq\begin{split}
\int^{+\infty}_{x_n+1}\exp\big(c|y_{n-1}|\big)\exp\pi\omega^{-1}_2\alpha(n-1)(y_{n-1}-x_{f(n-1)})dy_{n-1}\\
= \frac{\exp(c x_{f(n-1)})\exp(\pi\omega^{-1}_2\alpha(n-1)+c)(x_n+1-x_{f(n-1)})}{-\pi\omega^{-1}_2\alpha(n-1)-c} \to 0, ~~~ \omega_2\to 0^+.
\end{split}\eeq
with assumption that $x_n+1\geq 0$. It is important that $c$ is always can be chosen non-depending on $\omega_2$ which is an easy to follow along the proof restriction.

Since described three cases form self-similar picture, we are done.
\end{proof}

\section{\large Discussion}
\setcounter{equation}{0}

As well as in the general hyperbolic model, the family of raising operators $\Lambda^*_n(\mu)$ and commutative family $Q^*_n(\mu)$ of Baxter operators, defined and studied in this paper, deals with reflected coupling constant $1-g$ despite that the corresponding  well defined duality is missing in hyperbolic Sutherland model.

 We showed that for $g>1$ the raising operators $\Lambda^*_n(\mu)$ are actually the limits of the corresponding raising operators in general hyperbolic model. Presumably, the same holds for $0<g<1$ as well. 
 
 Unfortunately, this is not true for the Baxter operators $Q^*_n(\mu)$; the limit from general hyperbolic system diverges, and it is not clear how do they arise. However,  
 we are able to construct in a lot more sophisticated way similar limits with compositions of either raising operators or common-sign Baxter operators. Precisely,
\beq\begin{split}
\Lambda^*_{n+1}(\l)\Lambda^*_n(\mu)[f](\bx_{n+1}) = \Lambda^*_{n+1}(\mu)\Lambda^*_n(\l)[f](\bx_{n+1});\\
Q^*_{n,+}(\l)Q^*_{n,+}(\mu)[f](\bx_n) = Q^*_{n,+}(\mu)Q^*_{n,+}(\l)[f](\bx_n);\\
Q^*_{n,-}(\l)Q^*_{n,-}(\mu)[f](\bx_n) = Q^*_{n,-}(\mu)Q^*_{n,-}(\l)[f](\bx_n).
\end{split}\eeq

They are considered from the side of the kernel functions for the first integration out of two, and follow from the hyperbolic level. Meanwhile, the limits are proved only for $g\in(1,\frac{3}{2})$ since we can take uniformal bounds of the ratio of double sine functions in distinct ways, so extra singularities may either arise at once or sit at the complex domain with non-zero imaginary part and just tend to the real line. Surely, it produces extra issues in construction of uniformal majorating function.
Actually, restriction $g\in(1,\frac{3}{2})$ is not a problem since resulting functions are analytic in $g$ with $\Re g > 0$.

The hypebolic Ruijsenaars model posseses one more pair of Baxter and raising operators, acting in the space of spectral parameters and related to the dual coupling constant. The main tool to find and study analogous operators for the Sutherland model    is the use of degenerate Rains integral identities. Such trigonometric degenerations into identities on classical Mellin - Barnes integrals, suitable for the dual Sutherland model, were recently described in \cite{Ch}. This is the subject of the forthcoming paper.

\section*{Acknowlegements} The main ideas of this work emerged as a result of numerous discussions and debates with N. Belousov and S.Derkachev.  We sincerely thank them for all their help.
 This work was done within research projects implemented as part of the Basic Research Program at the HSE University, HSE-BR-2025-84,  HSE-BR-2025-062.
\color{black}

\appendix
\section{Proof of Lemma \ref{le_lst}.}
{\bf Lemma 3.1}\
{\it For $a \in (-1,+\infty)$, we have the bound}
\beq\begin{split}
\int\limits_{x_1<y_1<\ldots<x_n}d\by_{n-1}\prod^n_{i=1}\prod^{n-1}_{j=1}|x_i-y_j|^a\times\prod_{1\leq i<j<n-1}|y_i-y_j| \leq C\prod_{1\leq i < j \leq n}|x_i-x_j|^{2a+1}.
\end{split}\eeq
The constant can be taken
\beq\begin{split}
	C = \begin{cases}
		1, \qquad a\in[0,+\infty),\\
		\frac{4^{-a(n-1)}}{(a+1)^{n-1}}, \qquad a\in[-\frac{1}{2}, 0],\\
		\frac{2^{-an(n-1)}}{\sqrt{2}^{(n-1)(n-2)}(a+1)^{n-1}}, \qquad a\in(-1,-\frac{1}{2}].
	\end{cases}
\end{split}\eeq

Actually, this lemma consists of three distinct ones. Namely, we distinguish cases with $a\in[0,+\infty)$, $a \in (-\frac{1}{2}, 0)$, and $a\in(-1,-\frac{1}{2}]$.

\pmb{a.} If $a\geq 0$, we can majorate all elementary terms in a manner of \textit{decoupling}. Precisely, 
\beq\begin{split}
\text{if} ~~ x_i \leq x_j < y_j < x_{j+1}, ~~ \text{then} ~~ |x_i-y_j|^a \leq |x_i-x_{j+1}|^a;\\
\text{if} ~~ x_j < y_j < x_{j+1} \leq x_i, ~~ \text{then} ~~ |x_i-y_j|^a \leq |x_i-x_j|^a.
\end{split}\eeq
\beq\begin{split}
\text{If} ~~ x_i < y_i < x_{i+1} \leq x_j < y_j < x_{j+1}, ~~ \text{then} ~~ |y_i-y_j| \leq |x_i-x_{j+1}|.
\end{split}\eeq
Immediately, we obtain
\beq\begin{split}
\prod^n_{i=1}\prod^{n-1}_{j=1}|x_i-y_j|^a \leq \prod_{1\leq i < j\leq n}|x_i-x_j|^{2a},
\end{split}\eeq
\beq\begin{split}
\prod_{1\leq i < j\leq n-1}|y_i-y_j| \leq \prod_{1\leq i < j\leq n}|x_i-x_j|\times\prod_{1\leq i\leq n-1}|x_i-x_{i+1}|^{-1}.
\end{split}\eeq
Then integration produces $\int_{x_1<y_1<\ldots<x_n}d\by_{n-1}=\prod_{1\leq i\leq n-1}|x_i-x_{i+1}|$. So we can take $C = 1$ for $a\in[0,+\infty)$.

\pmb{b.} If $a\in(-\frac{1}{2},0)$, things become more complicated. Driving ideas are the following. We look at measure and \textit{decouple} all terms in both directions separately, i.e. 
\beq\begin{split}
	|y_i-y_j| \leq |x_i-y_j|^{1/2}|y_i-x_{j+1}|^{1/2}, \qquad i < j.
\end{split}\eeq
Then we deal with two essentially distinct terms
\beq\begin{split}
	|x_i-y_j|^{1/2+a}, \qquad i-j \neq 0,1, \qquad \text{and} \qquad |x_i-y_i|^a|y_i-x_{i+1}|^a.
\end{split}\eeq
The first ones can be \textit{decoupled} again.

To remain consistent, write directly what happens with integrand
\beq\begin{split}
\prod^n_{i=1}\prod^{n-1}_{j=1}|x_i-y_j|^a\times\prod_{1\leq i<j\leq n-1}|y_i-y_j| \leq\\ \prod_{1\leq i \leq n-1}|x_i-y_i|^a|x_{i+1}-y_i|^a\times\prod^n_{i=1}\prod_{\substack{1\leq j\leq n-1\\i-j\neq 0,1}}|x_i-y_j|^{1/2+a} \leq\\ \prod_{1\leq i \leq n-1}|x_i-y_i|^a|x_{i+1}-y_i|^a\times\prod_{1\leq i<j\leq n}|x_i-x_j|^{1+2a}\times\prod_{1\leq i\leq n-1}|x_i-x_{i+1}|^{-1-2a}.
\end{split}\eeq
Thus, we want to calculate the integral
\beq\begin{split}\label{int_a}
\int_{x_i<y_i<x_{i+1}}dy_i|x_i-y_i|^a|x_{i+1}-y_i|^a \leq \int^{(x_i+x_{i+1})/2}_{x_i}dy_i|x_i-y_i|^a|x_i-x_{i+1}|^a2^{-a}\\
+\int^{x_{i+1}}_{(x_i+x_{i+1})/2}dy_i|x_i-x_{i+1}|^a2^{-a}|x_{i+1}-y_i|^a = \frac{2^{-2a}}{a+1}|x_i-x_{i+1}|^{1+2a}.
\end{split}\eeq
It ends the proof for $a\in(-\frac{1}{2},0)$. Note that we can take $C = \Big(\frac{2^{-2a}}{a+1}\Big)^{n-1}$ here.

\pmb{c.} For $a\in(-1,-\frac{1}{2})$, we remark that \textit{decoupling} squeezes the interval instead of unclenching it (since the power is negative). By definition,
\beq\begin{split}
\text{if} \qquad x_i < x_j < y_j < x_{j+1}, \qquad \text{then} \qquad |x_i-y_j|^a \leq |x_i-x_j|^a;\\
\text{if} \qquad x_j < y_j < x_{j+1} < x_i, \qquad \text{then} \qquad |x_i-y_j|^a \leq |x_i-x_{j+1}|^a.
\end{split}\eeq
Note that we cannot \textit{decouple} $|x_i-y_i|^a$ and $|x_{i+1}-y_i|^a$, evidently.

As it was done in proof of the case $a\in(-\frac{1}{2},0)$, we want to replace measure by some terms from kernel function, then \textit{decouple} everything we can (depends on power) and justify the statement from lemma.

Consider $|y_i-y_j|$ with $i < j$. Note that
\beq\begin{split}
	|y_i-y_j| = |y_i-y_j|^{\alpha}|y_i-y_j|^{1-\alpha} \leq |x_i-y_j|^{\alpha}|y_i-x_{j+1}|^{1-\alpha}, \qquad \alpha \in [0, 1].
\end{split}\eeq
Assume that $\max\{|x_i-x_j|, |x_{i+1}-x_{j+1}|\} = |x_i-x_j|$, then take $\alpha = 1+a$ and \textit{decouple}
\beq\begin{split}\label{a_ineq}
	|y_i-y_j| \leq |x_i-y_j|^{1+a}|y_i-x_{j+1}|^{-a}.
\end{split}\eeq	
Otherwise, we take $\alpha = -a$.

Since any two measure terms affect exactly four distinct kernel terms, it is straightforward to deduce the resulting powers. Indeed, applying (\ref{a_ineq}), we arrive to inequality
\beq\begin{split}
\prod^n_{i=1}\prod^{n-1}_{j=1}|x_i-y_j|^a\times\prod_{1\leq i<j\leq n-1}|y_i-y_j| \leq&\\ \prod_{1\leq i\leq n-1}|x_i-y_i|^a&|x_{i+1}-y_i|^a \prod_{1\leq i<j\leq n-1}p(x_i,y_i,y_j,x_{j+1})
\end{split}\eeq
with
\beq\begin{split}
p(x_i,y_i,y_j,x_{j+1}) = \begin{cases}
|x_i-y_j|^{1+2a}, \qquad |x_i-x_{i+1}| \geq |x_j-x_{j+1}|,\\
|y_i-x_{j+1}|^{1+2a}, \qquad |x_j-x_{j+1}| \geq |x_i-x_{i+1}|.
\end{cases}
\end{split}\eeq
Since $1+2a \leq 0$, we will \textit{decouple} all $p(x_i,y_i,y_j,x_{j+1})$ to $q(x_i,x_{i+1},x_j,x_{j+1})$. Namely, $p(x_i,y_i,y_j,x_{j+1}) \leq q(x_i,x_{i+1},x_j,x_{j+1})$ with
\beq\begin{split}
q(x_i,x_{i+1},x_j,x_{j+1}) := \begin{cases}
|x_i-x_j|^{1+2a}, ~~~ |x_i-x_{i+1}|\geq|x_j-x_{j+1}|,\\
|x_{i+1}-x_{j+1}|^{1+2a}, ~~~ |x_j-x_{j+1}|\geq|x_i-x_{i+1}|.
\end{cases}
\end{split}\eeq
For simplicity, rewrite
\beq\begin{split}
q(x_i,x_{i+1},x_j,x_{j+1}) =: |x_i-x_j|^{1/2+a}|x_{i+1}-x_{j+1}|^{1/2+a}\bigg(\frac{|x_i-x_j|^{1/2+a}}{|x_{i+1}-x_{j+1}|^{1/2+a}}\bigg)^{\sigma_{ij}}
\end{split}\eeq
with $\sigma_{ij} := \text{sign}(|x_{i+1}-x_i|-|x_{j+1}-x_j|)$.

Finally, we integrate the remaining part (see (\ref{int_a})) as
\begin{equation*}\begin{split}
\int_{x_1<y_1<\ldots<y_{n-1}<x_n}d\by_{n-1}\prod^{n-1}_{i=1}|y_i-x_i|^a|x_{i+1}-y_i|^a = \Big(\frac{2^{-2a}}{a+1}\Big)^{n-1}\prod^{n-1}_{i=1}|x_i-x_{i+1}|^{1+2a}.
\end{split}\end{equation*} 

Therefore we bound the whole integral from lemma by
\beq\begin{split}
\Big(\frac{2^{-2a}}{a+1}\Big)^{n-1}\prod_{1\leq i < j\leq n}|x_i-x_j|^{1+2a}\times\frac{\prod_{1\leq i\leq n-1}|x_i-x_{i+1}|^{1+2a}}{\prod_{1\leq i\leq n-1}|x_1-x_{i+1}|^{1/2+a}|x_i-x_n|^{1/2+a}}\\
\prod_{1\leq i < j\leq n-1}\bigg(\frac{|x_i-x_j|^{1/2+a}}{|x_{i+1}-x_{j+1}|^{1/2+a}}\bigg)^{\sigma_{ij}}.
\end{split}\eeq

It remains to show that error term is bounded by something non-depending on parameters $\bx_n$. Reminding that $\frac{1}{2} + a \leq 0$, we end the proof by the following lemma.

\begin{lemma}\label{le_c}
We have the bound
\beq\begin{split}
\prod_{1\leq i < j\leq n-1}\bigg(\frac{|x_{i+1}-x_{j+1}|}{|x_i-x_j|}\bigg)^{\sigma_{ij}}\frac{\prod_{1\leq i\leq n-1}|x_1-x_{i+1}||x_i-x_n|}{\prod_{1\leq i\leq n-1}|x_i-x_{i+1}|^2} \leq C'
\end{split}\eeq
over $x_1 < x_2<\ldots<x_n$. We can take $C' = 4^{\frac{(n-1)(n-2)}{2}}$.
\end{lemma}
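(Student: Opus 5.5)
Reduce everything to the consecutive gaps $d_i=x_{i+1}-x_i>0$, $i=1,\ldots,n-1$, and factor the left-hand side into a product over pairs $1\le i<j\le n-1$. Each pair factor will then be shown to be at most $4$. There are $(n-1)(n-2)/2$ such pairs, which gives exactly $C'=4^{\frac{(n-1)(n-2)}{2}}$. Note that $\sigma_{ij}=\mathrm{sign}(d_i-d_j)$.

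\textbf{Step 1 (telescoping the single-index factors).} For each $i$, rewrite the two ratios $\frac{x_{i+1}-x_1}{d_i}$ and $\frac{x_n-x_i}{d_i}$ as telescoping products:
\begin{equation*}
\frac{x_{i+1}-x_1}{x_{i+1}-x_i}=\prod_{k=1}^{i-1}\frac{x_{i+1}-x_k}{x_{i+1}-x_{k+1}},\qquad
\frac{x_n-x_i}{x_{i+1}-x_i}=\prod_{k=i+1}^{n-1}\frac{x_{k+1}-x_i}{x_k-x_i}.
\end{equation*}
Then regroup the factors of $\prod_{i=1}^{n-1}\frac{|x_1-x_{i+1}||x_i-x_n|}{d_i^2}$ by unordered pairs $i<j$. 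The first product, taken at index $j$ with $k=i$, contributes $\frac{x_{j+1}-x_i}{x_{j+1}-x_{i+1}}$. The second product, taken at index $i$ with $k=j$, contributes $\frac{x_{j+1}-x_i}{x_j-x_i}$. Set $L=x_{j+1}-x_i$, $A=x_{j+1}-x_{i+1}$, $B=x_j-x_i$. The whole expression in Lemma \ref{le_c} then equals
\begin{equation*}
\prod_{1\le i<j\le n-1}\frac{L^2}{AB}\Big(\frac{A}{B}\Big)^{\sigma_{ij}}.
\end{equation*}

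\textbf{Step 2 (bounding one pair factor).} Observe that $L=B+d_j=A+d_i$ and $A=B-d_i+d_j$. Split into three cases according to $\sigma_{ij}$.
\begin{itemize}
\item If $\sigma_{ij}=1$, i.e.\ $d_i>d_j$, the factor is $L^2/B^2$. Since $B\ge d_i>d_j$, we have $L=B+d_j\le 2B$, so the factor is at most $4$.
\item If $\sigma_{ij}=-1$, the factor is $L^2/A^2$. Since $A\ge d_j>d_i$, we have $L=A+d_i\le 2A$, so again the factor is at most $4$.
\item If $\sigma_{ij}=0$, then $A=B\ge d_i$, so $L^2/(AB)\le 4$.
\end{itemize}
The small sanity check $n=3$ gives $(d_1+d_2)^2/\max(d_1,d_2)^2\le 4$, consistent with this.

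\textbf{Expected obstacle.} The only real difficulty is Step 1: finding the right pairing of the single-index factors with the $\sigma_{ij}$-factors. Individually, the factors $\frac{(x_{i+1}-x_1)(x_n-x_i)}{d_i^2}$ are unbounded, and each $\sigma$-factor is at most $1$. So boundedness comes only from matching each ratio $A/B$ against the correct telescoped pieces. Once the telescoping identity is written down, the rest is the elementary case check above.
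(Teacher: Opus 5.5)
Your proof is correct and is essentially the paper's argument. The paper runs a double induction: the outer induction on $n$ splits off the pairs with $j=n-1$, and the inner one on $k$ uses the quantities $M_k$. Each inner step multiplies by exactly your pair factor $\frac{L^2}{AB}\bigl(\frac{A}{B}\bigr)^{\sigma_{ij}}$, and the paper bounds it by $4$ using the same two cases $d_j\ge d_i$ and $d_j\le d_i$. Your telescoping identity just unrolls that induction into a closed product over pairs, and your separate treatment of $\sigma_{ij}=0$ (where $A=B$) spells out a case the paper leaves implicit.
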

\begin{proof}
Prove it by induction in $n$. Base with $n = 1,2$ is trivial. Assume that we know the fact for $n-1$. Then
\beq\begin{split}
&\prod_{1\leq i < j\leq n-1}\bigg(\frac{|x_{i+1}-x_{j+1}|}{|x_i-x_j|}\bigg)^{\sigma_{ij}}\frac{\prod_{1\leq i\leq n-1}|x_1-x_{i+1}||x_i-x_n|}{\prod_{1\leq i\leq n-1}|x_i-x_{i+1}|^2} =\\
&\qquad\qquad\prod_{1\leq i < j\leq n-2}\bigg(\frac{|x_{i+1}-x_{j+1}|}{|x_i-x_j|}\bigg)^{\sigma_{ij}}\frac{\prod_{1\leq i\leq n-2}|x_1-x_{i+1}||x_i-x_{n-1}|}{\prod_{1\leq i\leq n-2}|x_i-x_{i+1}|^2}\\
&\qquad\qquad\qquad\qquad\prod_{1\leq i\leq n-2}\bigg(\frac{|x_{i+1}-x_n|}{|x_i-x_{n-1}|}\bigg)^{\sigma_{i,n-1}}\frac{|x_1-x_n|^2}{|x_{n-1}-x_n|^2}\frac{\prod_{1\leq i\leq n-2}|x_{i+1}-x_n|}{\prod_{1\leq i\leq n-2}|x_i-x_{n-1}|},
\end{split}\eeq
so we have to bound the last row.

Here, we introduce another induction. To do it, define
\beq\begin{split}
M_k(x_k,\ldots,x_n) := \frac{|x_k-x_n|^2}{|x_{n-1}-x_n|^2}\prod_{k\leq i\leq n-2}\bigg(\frac{|x_{i+1}-x_n|}{|x_i-x_{n-1}|}\bigg)^{1+\sigma_{i,n-1}}.
\end{split}\eeq
Note that, initially, we were interested in $M_1(\bx_n)$.\\
We claim that
\beq\begin{split}
	M_k(x_k,\ldots,x_n) \leq C'', \qquad C'' = 4^{n-1-k}.
\end{split}\eeq
Base with $k = n-1$ is trivial. We take $C'' = 1$. Assume that we know the fact for $k$. Then
\beq\begin{split}
M_{k-1}(x_{k-1},\ldots,x_n) = M_k(x_k,\ldots,x_n)\frac{|x_{k-1}-x_n|^2}{|x_k-x_n|^2}\bigg(\frac{|x_k-x_n|}{|x_{k-1}-x_{n-1}|}\bigg)^{1+\sigma_{k-1,n-1}}.
\end{split}\eeq
Either $|x_{n-1}-x_n|\geq|x_{k-1}-x_k|$ and remaining term is 
\beq\begin{split}
\frac{|x_{k-1}-x_n|^2}{|x_k-x_n|^2} = 1 + 2\frac{|x_{k-1}-x_k|}{|x_k-x_n|} + \frac{|x_{k-1}-x_k|^2}{|x_k-x_n|^2} \leq 4,
\end{split}\eeq
or $|x_{n-1}-x_n|\leq|x_{k-1}-x_k|$ and remaining term is 
\beq\begin{split}
\frac{|x_{k-1}-x_n|^2}{|x_{k-1}-x_{n-1}|^2} = 1 + 2\frac{|x_{n-1}-x_n|}{|x_{k-1}-x_{n-1}|} + \frac{|x_{n-1}-x_n|^2}{|x_{k-1}-x_{n-1}|^2} \leq 4.
\end{split}\eeq

As a consequence, we can take $C' = 4^{\frac{(n-1)(n-2)}{2}}$.
\end{proof}
If we raise inequality from Lemma \ref{le_c} to non-negative power of $-a-\frac{1}{2}$, then we end the proof for $a\in(-1,-\frac{1}{2}]$. Resulting $C$ equals to
\beq\begin{split}
	\frac{2^{-an(n-1)}}{\sqrt{2}^{(n-1)(n-2)}(a+1)^{n-1}}.
\end{split}\eeq

	\end{document}